\documentclass[aps,prl,notitlepage,twocolumn,superscriptaddress,longbibliography,nofootinbib]{revtex4-1}
\pdfoutput=1
\usepackage[utf8]{inputenc}

\usepackage{amsmath,amsthm,amssymb,amsfonts}
\usepackage{scalerel}
\usepackage{mathtools}
\usepackage{graphicx}
\usepackage{subfigure}
\usepackage{bbm}
\usepackage[normalem]{ulem}
\usepackage[colorlinks = true,
            linkcolor = red,
            urlcolor  = magenta,
            citecolor = red,
            anchorcolor = blue]{hyperref}
\usepackage{mathrsfs}
\usepackage{bbold}
\usepackage{units}
\allowdisplaybreaks
\usepackage{bm}
\usepackage{braket}
\usepackage{makecell}
\usepackage[nameinlink]{cleveref} 
\usepackage{upgreek}
\usepackage{blindtext}
\usepackage{verbatim}
\usepackage{algorithm}
\usepackage[noend]{algpseudocode}
\usepackage[dvipsnames]{xcolor}
\usepackage{bbm}
\usepackage{array}
\usepackage{multirow}
\usepackage{tabularx}
\usepackage{float}
\usepackage{dcolumn}
\usepackage{slashed}
\usepackage{braket}
\usepackage{verbatim}
\usepackage{multirow}
\usepackage{resizegather}
\usepackage{titlesec}
\usepackage[toc,page]{appendix}
\usepackage[export]{adjustbox}
\usepackage{tikz}
\usetikzlibrary{shapes.geometric}

\newcommand{\sz}{\sigma^{z}}
\newcommand{\sx}{\sigma^{x}}
\newcommand{\tr}{\mathrm{tr}}
\theoremstyle{definition}
\theoremstyle{plain}
\newtheorem{thm}{Theorem}
\newtheorem{thm*}{Theorem}

\newtheorem*{prop*}{Property}
\newtheorem{claim}{Claim}

\newtheorem*{ctr*}{Conjecture}
\newtheorem{lemma}{Lemma}
\newtheorem*{lemma*}{Lemma}
\newtheorem*{replemma}{Lemma 2}
\newtheorem{coro}{Corollary}
\newtheorem{rcoro}{Corollary}
\newtheorem*{coro*}{Corollary}
\begin{document}

\title{$p$-Body $\simeq$ Range $p-1$: Exact Order–Range Mapping and Dual-Unitarity \\
}
\author{Tanay Pathak\,\,\href{https://orcid.org/0000-0003-0419-2583}
{\includegraphics[scale=0.05]{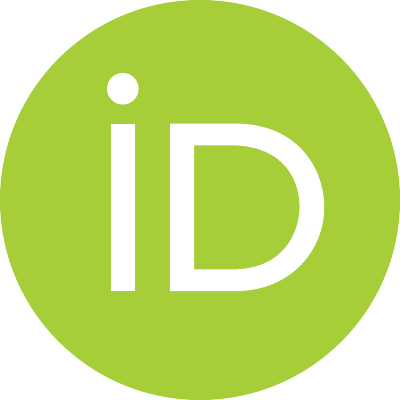}}}
\email{pathak.tanay.4s@kyoto-u.ac.jp}
\affiliation{Department of Physics, Kyoto University, Kitashirakawa Oiwakecho, Sakyo-ku, Kyoto 606-8502, Japan}

\begin{abstract}
We introduce and study a periodically kicked version of a one-dimensional spin-$1/2$ Ising model with contiguous $p$-body interactions. At specific interaction strengths of the model, we establish an exact Floquet interaction \emph{order–range} mapping:  the Floquet evolution operator of the $p$-body Ising model is exactly mapped to a two body Ising model with interactions having maximum range $p-1$ and exactly determined couplings. Upon further tuning the kicking strength, the mapped model becomes dual-unitary,
thus providing an explicit family of $p$-body dual unitary models.
We then determine exact entanglement dynamics of the $p$-body model for a class of solvable initial states at all times. We further illustrate the nontrivial structures that can emerge from the \emph{order–range} mapping using a minimal example of $p=3$ and generalize it for all odd $p$. These findings provide a systematic route to constructing and studying $p$-body dual-unitary Floquet models.
\end{abstract}

\maketitle
%{\huge We can put some of the information in the Supplemental Materials \cite{supp}}
%~~~~~~~~~~~~~~~~~~~~~~~~~~~~~~~~~~~~~~~~~~~~~~~~~~~~~~~~~~~~~~~~
\emph{Introduction.} Understanding the non-equilibrium dynamics of interacting quantum many-body systems remains a central challenge. The exponential growth of the Hilbert-space dimension with system size makes exact calculations, and often even controlled numerical simulations of long-time dynamics, prohibitively difficult. Analytically tractable models that retain genuine interactions are therefore especially valuable. A particularly important class of such models is based on space-time symmetry, a discrete symmetry between spatial and temporal directions to study the quantum dynamics \cite{PhysRevLett.102.240603,Akila_2016,Bertini:2025ddr}. This property called \emph{dual unitarity} has been extensively applied to understand a range of problems in quantum many-body physics, such as the exact dynamics of correlation functions \cite{PhysRevLett.123.210601,PhysRevB.102.174307,PhysRevLett.126.100603,PhysRevB.101.094304,PhysRevLett.133.170403}, studies of entanglement spreading \cite{Bertini:2018fbz,PhysRevLett.132.120402,Gopalakrishnan:2019pip,Bertini:2019gbu,Bertini:2019wkb,PhysRevLett.125.070501,Reid:2021fsg,Zhou:2022uuw,PhysRevB.107.174311,PRXQuantum.6.010324,Pathak:2026rfk}, and the spectral form factor \cite{Bertini:2018wlu,Bertini:2020mdd,PhysRevResearch.2.043403,PhysRevX.11.021051,PhysRevResearch.6.033226,PhysRevB.111.094316}, among many others. Recent extensions and generalizations of dual-unitarity have also been developed in several directions, including hierarchical/generalized dual-unitary circuits \cite{Yu:2023tmz,Rampp:2026cdu} and generalized space-time duality \cite{Jonay:2021kgl,PhysRevResearch.6.033271,PhysRevLett.132.250402,PhysRevResearch.7.L012011,Rampp:2024apt}. Within this framework the kicked-field Ising model (KFIM) \cite{Prosenkfim_1998,prosenkfim,PhysRevE.65.036208,Prosen:2007hwp,PhysRevE.76.061127} presents itself as a particularly useful model. Owing to its dual-unitarity \cite{Akila_2016}, the model often serves as a minimal setting to study quantum chaos and non-equilibrium dynamics in interacting quantum many-body systems. Much of the theoretical framework of these models however, is built mostly around two-body interactions. Extensions beyond this setting include interaction round-a-face circuits \cite{Prosen:2021wvx,Prosen:2021hks,Claeys:2023yiv} and tri-unitary circuits \cite{Jonay:2021kgl,PhysRevResearch.7.L012011}. Genuine multi-body interactions also arise naturally in various other situations as well such as in lattice gauge theories \cite{Kitaev_2003,Kitaev:2005hzj,Chandrasekharan_1997}, effective spin Hamiltonians with ring-exchange processes \cite{PhysRevB.37.9753,PhysRevB.39.2299,PhysRevLett.86.5377}, stabilizer and cluster-state models of quantum information \cite{PhysRevLett.93.056402,PhysRevA.71.062313,PhysRevA.69.062311,Hein_2004}. They have also been simulated in quantum computers \cite{Dai:2016coe, Katz_2023,Satzinger_2021}. 

Motivated by these, we take a step further and investigate the possibility of dual unitarity in a model with $p$-body interactions. To achieve this we first introduce a periodically kicked version of the $p$- body Ising model \cite{turban1982self,PhysRevB.29.519,PhysRevB.26.6334,kolb1986conformal,alcaraz1987conformal,igloi1986series,PhysRevB.108.214430} (also see  \cite{note1}). We then establish an exact mapping between the Floquet evolution operators of the kicked $p-$ body Ising model and the Floquet evolution operator of the kicked Ising model consisting of only $2$- body interactions with maximum range $p-1$. This mapping, as we later show, is independent of the dual unitarity of the model, and only dependent on the interaction strength. Upon further tuning the kicking strength of the $p$- body model we observe that the resulting mapped model becomes dual unitary, thus providing a concrete first realization of its kind. This further allows for exact evaluation of entanglement dynamics of a class of initial solvable states \cite{Pathak:2026jfn}. We also analyze the minimal case of $p=3$, highlighting the rich structure of these models.

%~~~~~~~~~~~~~~~~~~~~~~~~~~~~~~~~~~~~~~~~~~~~~~~~~~~~~~~~~~~~~~~~
\emph{The model.} Our starting point is the $p$-body generalization of a one dimensional spin-$1/2$ Ising model \cite{turban1982self,PhysRevB.29.519,PhysRevB.26.6334,kolb1986conformal,alcaraz1987conformal,igloi1986series,PhysRevB.108.214430} of length $L$. The Hamiltonian of the model is given by 
\begin{equation}
    H^{(pB)} = \sum_{j=1}^{L} J \sigma^{z}_{j}\sigma^{z}_{j+1}\cdots \sigma^{z}_{j+p-1} + \sum_{j=1}^{L} b\, \sigma^{x}_{j}.
\end{equation}
Here, $pB\equiv \text{p-body}$ and  $\sz_{j},\sx_{j}$ are the Pauli matrices acting non-trivially on the $j-$th site and we assume periodic boundary conditions as $\sz_{L+1} \equiv \sz_{1}$. Let us now introduce a kicked version of this model \emph{akin} to the standard KFIM \cite{prosenkfim,PhysRevE.65.036208,Prosen:2007hwp}. Setting time interval between two kicks to be unity we have the following Hamiltonian  
\begin{align}\label{eq:pbodyham}
    H^{(pB)}_{\text{KFIM}}&= H_{I} + H_{K} \sum_{\tau= -\infty}^{\infty} \delta(t- \tau), \\
    H^{(pB)}_{I}& = \sum_{j=1}^{L} J \sigma^{z}_{j}\sigma^{z}_{j+1}\cdots \sigma^{z}_{j+p-1} + \sum_{j=1}^{L}h_{j}\sigma_{j}^{z}, \\
    H^{(pB)}_{K}& = \sum_{j=1}^{L} b\, \sigma^{x}_{j}.
\end{align}
The total Floquet operator of the system is 
\begin{equation}\label{eq:pbodyu}
    U^{(pB)}_{\text{KFIM}}\equiv U^{(pB)}_{\text{KFIM}}[J,b,\mathbf{h}] = U_{K}U_{I}^{(pB)},
\end{equation}
where $U_{I}^{(pB)}= e^{-i H_{I}^{(pB)}}$ and $U_{K}= e^{-i H_{K}}$ and  $J,b$ and $\mathbf{h}= \{h_{1},h_{2},\cdots, h_{L}\}$ are generic. 

% \begin{figure}[H]
%     \centering
%     \includegraphics[width=  \linewidth]{order_range.pdf}
% \caption{Schematic of interaction \emph{order-range} mapping for minimal case of $3$-body kicked Ising model and interaction strength $\frac{\pi}{4}$. In the Floquet setting the 3-body interaction model is mapped to another model with 2-body interactions with maximal range of 2. The mapping infact depends on the length of chain, $L$. For even $L$ the model is mapped to two independent decoupled chains, while for odd $L$ the model becomes a standard KFIM after some relabeling, as we later show.}\label{fig:schematic}
% \end{figure}

%~~~~~~~~~~~~~~~~~~~~~~~~~~~~~~~~~~~~~~~~~~~~~~~~~~~~~~~~~~~~~~~~
\begin{figure}[t]
    \centering
    \includegraphics[width=0.5\textwidth]{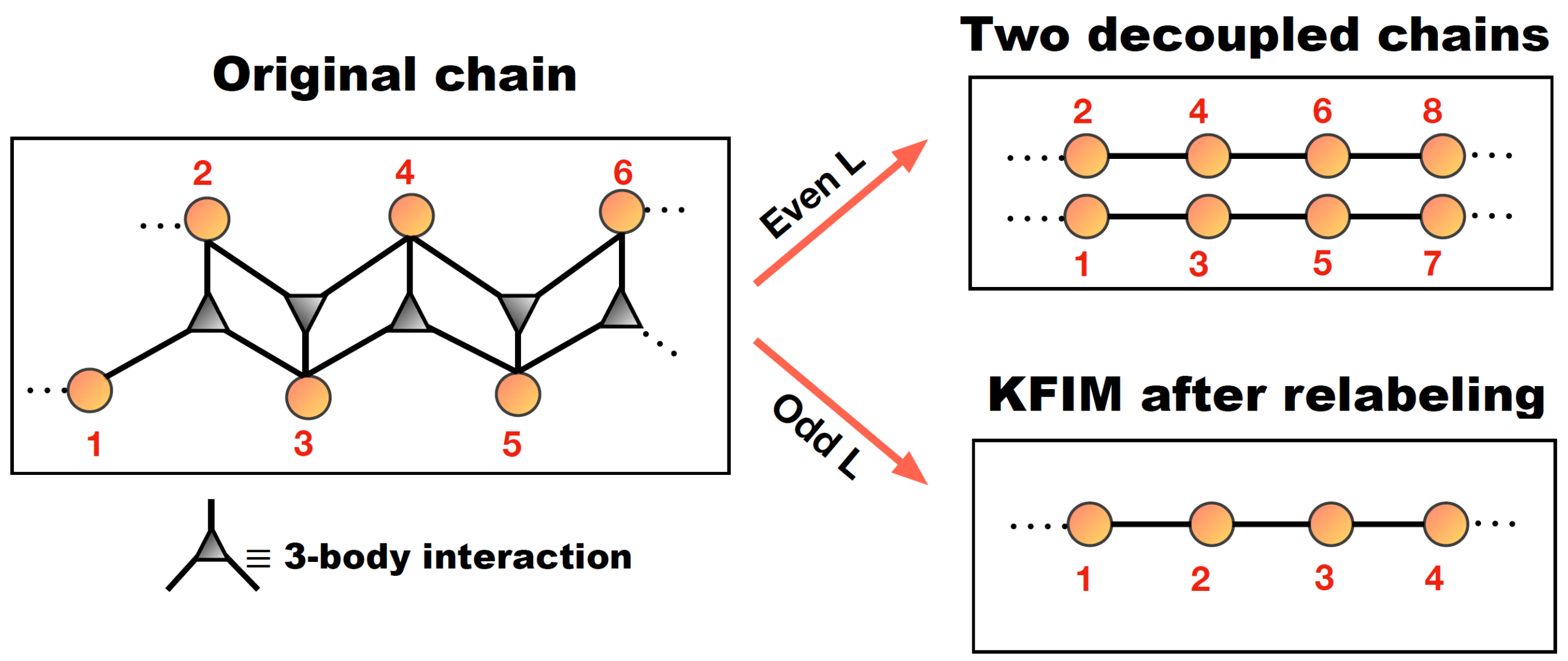}
    \caption{Schematic of interaction \emph{order-range} mapping for minimal case of $3$-body kicked Ising model and interaction strength $\frac{\pi}{4}$. In the Floquet setting the 3-body interaction model is mapped to another model with 2-body interactions with maximal (possible) range of 2. The mapping further depends on the length of chain, $L$. For even $L$ the model maps onto two independent, decoupled chains, while for odd $L$ it is equivalent to KFIM after relabeling, as we later show.}
    \label{fig:orderrange}
\end{figure}
\emph{Exact order-range mapping.} At present the model does not offer any further simplifications. Aiming to further simplify the model we first note the following lemma, that will be useful.
\begin{lemma}\label{lemma1}
    For $J= \frac{n \pi}{2^{p-1}}$, $\forall{n} \in \mathbb{Z}$, following relation holds:
    \begin{equation}
        e^{-iJ (-1)^{p+1}\prod_{k=1}^{p}(1-\sigma^{z}_{i_k})}= \mathbbm{I},
    \end{equation}
where $i_{1} < i_{2} < \cdots <i_{p}$ and $\mathbbm{I}$ denotes the identity matrix. 
\end{lemma}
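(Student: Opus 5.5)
The idea is to diagonalize the exponent in the computational ($\sigma^{z}$) basis. Each $\sigma^{z}_{i_k}$ acts on a distinct site, since $i_1<\cdots<i_p$. The operators $\sigma^{z}_{i_k}$ therefore commute and are simultaneously diagonal. Hence the product $A\equiv\prod_{k=1}^{p}(1-\sigma^{z}_{i_k})$ is a diagonal operator, and
\[
e^{-iJ(-1)^{p+1}A}
\]
is diagonal in the $\sigma^z$ product basis. Its entries are obtained by replacing each $\sigma^{z}_{i_k}$ by its eigenvalue $s_k\in\{+1,-1\}$. It therefore suffices to show that, for every configuration $(s_1,\dots,s_p)\in\{\pm1\}^p$, the phase $e^{-iJ(-1)^{p+1}\prod_k(1-s_k)}$ equals $1$.

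Next I evaluate the scalar $\prod_k(1-s_k)$. Each factor $1-s_k$ is $0$ if $s_k=+1$ and $2$ if $s_k=-1$. So the product vanishes unless all $s_k=-1$, in which case it equals $2^p$. In the first case the phase is $e^{0}=1$ trivially. In the second case the exponent is $-iJ(-1)^{p+1}2^p$. With $J=n\pi/2^{p-1}$ this becomes $-i(-1)^{p+1}\,2n\pi$, and so the phase is $e^{\mp 2\pi i n}=1$ for all $n\in\mathbb{Z}$. The sign $(-1)^{p+1}$ plays no role, since it only flips the orientation of a multiple of $2\pi$.

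Combining the two cases, every diagonal entry is $1$, so the operator equals $\mathbbm{I}$ on the full $2^L$-dimensional space; the sites outside $\{i_k\}$ enter only as spectators. There is no real obstacle here. The one point to state explicitly is that $A$ is diagonal with spectrum $\{0,2^p\}$, equivalently $A=2^p P$ with $P=\prod_k\frac{1-\sigma^z_{i_k}}{2}$ a projector, so that $e^{i\theta A}=\mathbbm{I}+(e^{i\theta 2^p}-1)P$. This last form gives a compact one-line version of the argument.
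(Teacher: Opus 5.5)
Your proof is correct and matches the paper's own argument: on $\sigma^z$ product states the factor $\prod_k(1-s_k)$ vanishes unless all $s_k=-1$, where it equals $2^p$, so the phase is $e^{\mp 2\pi i n}=1$. The projector form $e^{i\theta A}=\mathbbm{I}+(e^{i\theta 2^p}-1)P$ is a compact restatement of the same two-case check, and it runs in the direction the lemma asserts, whereas the paper's second case is phrased as deriving the condition on $J$.
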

%~~~~~~~~~~~~~~~~~~~~~~~~~~~~~~~~
\emph{Proof.} Let $\ket{s}$ denote the eigenstate of $\sigma^{z}$ such that $\sigma_{z}\ket{s}= s \ket{s}$; $s\in \{1,-1\}$. $\psi_{s}=\ket{s_{1},s_{2},\cdots,s_{L}}$ then denote a basis state for chain of length $L$. Acting on $\psi_{s}$, the LHS has following two possibilities
\begin{enumerate}
    \item If we have $\ket{s_{i_k}}=\ket{+1}$ for any $ k \in\{1,\cdots,p\}$, then $e^{-iJ (-1)^{p+1}\prod_{k=1}^{p}(1-s_{i_k})}= 1$ and the Lemma holds trivially.
    \item The only remaining case is when $\ket{s_{i_k}}= \ket{-1}, \forall k \in\{1,\cdots,p)$. To ensure that LHS is equal to 1 we require: $e^{-i2^{p}(-1)^{p+1}J}= 1 \implies J= \frac{n \pi}{2^{p-1}}.$
\end{enumerate}
This concludes the proof. It is also possible to exactly determine the expansion of $ \prod_{k=1}^{p}(1-\sz_{i_{k}})$ appearing in the left hand side of Lemma \eqref{lemma1}. It is explicitly given using the following Lemma.
\begin{lemma}\label{lemma2}
    Let $\sz_{i}$, denote the $\sz$ operator acting non-trivially on the $i$-th site of length $L$ chain, then the following is true.
    \begin{align}
    \prod_{k=1}^{p}(1-\sz_{i_{k}})&= 1 - \sum_{k=1}^{p}\sz_{i_{k}} + \sum_{1 \leq k_{1}<k_{2}\leq p }^{p}\sz_{i_{k_{1}}}\sz_{i_{k_{2}}}- \cdots \nonumber\\
    & + (-1)^{p}\sz_{i_{1}}\sz_{i_{2}}\cdots \sz_{i_{p}}. 
\end{align}
\end{lemma}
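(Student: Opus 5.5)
The identity in Lemma \ref{lemma2} is the finite-product analogue of the binomial expansion $\prod_{k}(1-x_k)=\sum_{S}(-1)^{|S|}\prod_{k\in S}x_k$. It holds because the operators $\sz_{i_1},\dots,\sz_{i_p}$ act on distinct sites and therefore commute pairwise. The plan is to exploit this commutativity, so that the product can be expanded exactly as a polynomial in commuting variables.

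First, I note that $\sz_{i_k}$ and $\sz_{i_l}$ commute for all $k,l$, since they act nontrivially on different tensor factors (the indices $i_1<\dots<i_p$ are distinct). The identity $1$ commutes with everything. Hence the subalgebra generated by $\{\sz_{i_k}\}$ is commutative, and the usual distributive expansion of a product of sums applies without any ordering ambiguity. Expanding $\prod_{k=1}^{p}(1-\sz_{i_k})$ means choosing, from each factor, either $1$ or $-\sz_{i_k}$. A choice corresponds to a subset $S\subseteq\{1,\dots,p\}$ and contributes $(-1)^{|S|}\prod_{k\in S}\sz_{i_k}$. Grouping subsets by cardinality $m=|S|$ yields the $m$-th term $(-1)^m\sum_{k_1<\dots<k_m}\sz_{i_{k_1}}\cdots\sz_{i_{k_m}}$. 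This reproduces the stated expansion, from the constant $1$ through the top term $(-1)^p\sz_{i_1}\cdots\sz_{i_p}$.

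To make this rigorous I will argue by induction on $p$. The case $p=1$ is trivial. For the inductive step, multiply the expansion for $p-1$ by $(1-\sz_{i_p})$. Each subset $S\subseteq\{1,\dots,p-1\}$ gives rise to two terms: $S$ itself, and $S\cup\{p\}$ with an extra sign. Because $\sz_{i_p}$ commutes with every term, it can be placed at the right end, preserving the increasing-index ordering $k_1<\dots<k_m$. This bijection between the terms for $p$ and the pairs (old subset, include $p$ or not) closes the induction.

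An alternative check is available, consistent with the proof of Lemma \ref{lemma1}. Both sides are diagonal in the basis $\psi_s$, and on $\psi_s$ the left-hand side evaluates to $\prod_k(1-s_{i_k})$. The right-hand side evaluates to the expanded scalar polynomial, and these agree by ordinary algebra. There is no real obstacle here. The only point needing care is the commutativity justification, which is what licenses treating the $\sz_{i_k}$ as commuting scalars. I would state it explicitly rather than leave it implicit.
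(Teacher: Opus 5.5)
Your proposal is correct and takes the same route as the paper: induction on the number of factors, multiplying the $n$-factor expansion by $(1-\sz_{i_{n+1}})$ and regrouping terms according to whether $\sz_{i_{n+1}}$ appears. Your explicit commutativity remark and the diagonal-basis check are harmless extras. The paper leaves commutativity implicit, and strictly it is not needed, because the new factor is multiplied on the right and so already preserves the increasing-index ordering of each product.
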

See \cite{supp} for a proof. 
%~~~~~~~~~~~~~~~~~~~~~~~~~~~~~~~
A direct consequence of Lemma \eqref{lemma1} and \eqref{lemma2} is that for the specific values of the interaction strength, i.e. $J= \frac{n \pi}{2^{p-1}}$; $\forall{n} \in \mathbb{Z}$, one can reduce the Floquet operator of the $p$- body model to the Floquet operator of the $(p-1)$-body model. This already offers a simplification as it means that the dynamics of the $p$-body model for specific interaction strengths is exactly equal to the dynamics of the $(p-1)-$ body model. However, a further reduction is not possible as the resulting expression still contains terms of all interaction order $\leq p-1$ and the coupling is a function of $p$.  We however note that the strongest possible constraint is obtained from $p=3$, i.e. if we set $J= \frac{n \pi}{4};n\in\mathbb{Z}$, then we can recursively reduce, using Lemma \eqref{lemma1} and \eqref{lemma2}, each term with $p >2$ body interaction completely, to a $2$-body interaction term  having maximal range of $p-1$. Thus we obtain the following order–range mapping theorem, that constitutes a main result of this work.
%~~~~~~~~~~~~~~~~~~~~~~~~~~~~~~~~
\begin{thm}[Order--Range mapping]
    For a $p$-body Ising model of size $L$, with total Floquet evolution operator $U^{(pB)}_{\text{KFIM}}$, as given by Eq. \eqref{eq:pbodyu}, and with $J= \frac{n \pi}{4};n\in\mathbb{Z}$, we have the following exact relation for the Floquet evolution operators
    \begin{equation}
       U^{(pB)}_{\text{KFIM}}= \exp[i \phi(L)] U^{((p-1)R)},
    \end{equation}
where $(p-1)R\equiv \text{range }p-1$, $\phi(L)$ is a system size dependent global phase,  and $U^{((p-1)R)}$ is the Floquet evolution operator of a spin-$1/2$ model with $2$-body interactions of maximum range $p-1$.
\end{thm}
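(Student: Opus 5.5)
Since $U_K$ is common to both sides, the proof only needs to show that $U_I^{(pB)}$, which is diagonal in the $\sigma^z$ basis, equals a global phase times the Ising part of a range-$(p-1)$ two-body model. All terms of $H_I^{(pB)}$ commute, so $U_I^{(pB)}=\prod_j e^{-iJ\sigma^z_j\cdots\sigma^z_{j+p-1}}\prod_j e^{-ih_j\sigma^z_j}$. The field factor is already of the allowed form and is left untouched. It therefore suffices to show that, for $J=n\pi/4$, each contiguous string $e^{-iJ\sigma^z_j\cdots\sigma^z_{j+p-1}}$ equals a phase times a product of exponentials of one- and two-body $\sigma^z$ terms supported inside $\{j,\dots,j+p-1\}$. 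Such two-body terms have range at most $p-1$.

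The core of the argument is a reduction by induction on the order $m$ of the $\sigma^z$-string, working with a general diagonal exponent $e^{-i\theta\,\sigma^z_{i_1}\cdots\sigma^z_{i_m}}$. The claim is that for $m\ge 3$ and $\theta\in\frac{\pi}{4}\mathbb{Z}$, this equals a phase times a product of exponentials of strings of order $\le m-1$ on subsets of $\{i_1,\dots,i_m\}$, with coefficients again in $\frac{\pi}{4}\mathbb{Z}$.

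To prove the claim, I would apply Lemma~\ref{lemma1} with $p\to m$ and $J\to\theta$. This is legitimate because $\frac{\pi}{4}\mathbb{Z}\subseteq\frac{\pi}{2^{m-1}}\mathbb{Z}$ for $m\ge3$. I then expand the product with Lemma~\ref{lemma2}. Since all terms commute, the identity gives $e^{-i\theta\sigma^z_{i_1}\cdots\sigma^z_{i_m}}$ as $e^{-i\theta(-1)^{m+1}}$ times $\prod_{S\subsetneq\{1..m\},\,|S|\ge1} e^{-i c_S\theta\prod_{k\in S}\sigma^z_{i_k}}$, where each $c_S=\pm1$. The sign bookkeeping between $(-1)^{m+1}$ and the $(-1)^{|S|}$ of Lemma~\ref{lemma2} is routine. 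Every new coefficient is $\pm\theta$, which stays in $\frac{\pi}{4}\mathbb{Z}$. This closure of the coefficient set is precisely why the threshold $J=n\pi/4$ (set by $m=3$) allows the recursion to run all the way down.

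Iterating the claim, every string of order $\ge3$ is eventually eliminated. This leaves only identity terms (giving phases), single-site $\sigma^z$ terms, and pairs $\sigma^z_a\sigma^z_b$ with $a,b$ in the original window. Hence $|a-b|\le p-1$, with periodic distance when the window wraps around the boundary.

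Multiplying over $j=1,\dots,L$ and collecting everything, the phases combine into $\phi(L)$. The single-site terms are absorbed into shifted fields $h_j\to h_j+\delta h_j$, with $\delta h_j\in\frac{\pi}{4}\mathbb{Z}$ independent of $j$ by translation invariance. The pair terms give couplings $J_r\sigma^z_j\sigma^z_{j+r}$ for $r=1,\dots,p-1$. This defines $U^{((p-1)R)}=U_K e^{-i\sum_{j,r}J_r\sigma^z_j\sigma^z_{j+r}-i\sum_j (h_j+\delta h_j)\sigma^z_j}$, and the theorem follows.

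The step I expect to be the main obstacle is keeping the recursion well-defined and finite. Lower-order strings generated from different windows, and at different recursion levels, may coincide; their coefficients then add, and some may cancel mod $2\pi$. This affects only the explicit values of $J_r$, not the structure of the result. To make the couplings explicit, I would obtain them in closed form by noting that the full reduction amounts to writing $J\prod_k s_k$ mod $2\pi$ as a polynomial of degree $\le2$ in the spins. Concretely, with $s=1-2x$ and $x\in\{0,1\}$, one has $\prod_k s_k=(-1)^{\sum_k x_k}$. For $J=n\pi/4$, the exponent $\tfrac{n\pi}{4}(-1)^{N}$ is a function of $N=\sum_k x_k$ taken mod $8$, and it is quadratic in the $x$'s mod $2\pi$. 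Expanding it gives the $J_r$ directly.
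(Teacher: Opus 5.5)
Your proof is correct. Its core is the recursion the paper sketches in the main text and carries out explicitly for $p=4$: apply Lemma~\ref{lemma1} at order $m$, expand with Lemma~\ref{lemma2} to trade an $m$-string for strings on proper subsets, and iterate. The signs work out as $c_S=(-1)^{m+1+|S|}$, and $\frac{\pi}{4}\mathbb{Z}$ is indeed closed under the step. The ``obstacle'' you flag is harmless, since all the exponentials commute and coinciding strings simply add their coefficients.

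For general $p$, however, the paper argues differently. It first reduces to $J=\pi/4$ by taking $n$-th powers, whereas you carry $\theta\in\frac{\pi}{4}\mathbb{Z}$ through the recursion. It then proves the closed form
\[
e^{-i\frac{\pi}{4}\sigma^z_{i_1}\cdots\sigma^z_{i_p}}=\exp\Big[-i\tfrac{\pi}{4}\Big(\tbinom{p-1}{2}-(p-2)\sum_k\sigma^z_{i_k}+\sum_{k<l}\sigma^z_{i_k}\sigma^z_{i_l}\Big)\Big]
\]
by a two-step induction. The $(p-1)$-string $A$ is treated as a single $\pm1$ spin in the three-body identity for $A\sigma^z_{i_p}\sigma^z_{i_{p+1}}$. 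The resulting strings $A\sigma^z_{i_p}$, $A\sigma^z_{i_{p+1}}$ are handled by the $p$-body hypothesis, and $A$ by the $(p-1)$-body one. Summing over windows then gives $J_r=(p-r)J$ and $\tilde h_i=h_i-p(p-2)J$.

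Your recursion gives existence directly and makes the $m=3$ threshold transparent. On its own, though, it only shows that \emph{some} range-$(p-1)$ couplings exist. The explicit values are what the paper relies on afterwards: dual unitarity needs the range-$(p-1)$ coupling to be $\pm\pi/4$, and Corollary~\ref{corr1} uses the parity of $p-r$.

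Your closing remark supplies these values more efficiently than the paper's induction, once it is stated precisely. The point is that $\frac{n\pi}{4}M \bmod 2\pi$ depends only on the integer $M \bmod 8$, and
\[
(-1)^N=(1-2)^N\equiv 1-2N+4\tbinom{N}{2}\pmod 8,
\]
with $N=\sum_k x_k$ and $\binom{N}{2}=\sum_{k<l}x_kx_l$ for $x_k\in\{0,1\}$. Substituting $x_k=(1-s_k)/2$ reproduces exactly the paper's closed form $\binom{p-1}{2}-(p-2)\sum_k s_k+\sum_{k<l}s_ks_l$ in one line, with no induction.
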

See \cite{supp} for a proof. We have thus obtained a huge simplification as a result of interaction \emph{order-range} mapping i.e. the Floquet evolution operator of the $p$-body Ising model coincides, up to a global phase, to an Ising model with only $2$- body interactions having maximum range of $p-1$. Remarkably, it is also possible to obtain the Hamiltonian corresponding to the mapped model explicitly, which is given as follows (see \cite{supp})
\begin{align}\label{eq:3rmodel}
H^{((p-1)R)}_{\text{KFIM}}&= H_{I} + H_{K} \sum_{\tau= -\infty}^{\infty} \delta(t- \tau)
\end{align}
 Here, $H^{((p-1)R)}_{I} = J \sum_{r=1}^{p-1}(p-r) \sum_{i=1}^{L}\sz_{i}\sz_{i+r} +\sum_{i} \tilde{h}_{i} \sigma_{i}^{z}$ and $H_{K} = b \sum_{i=1}^{L} \, \sigma^{x}_{i}$, $J= \frac{n\pi}{4}; n\in \mathbb{Z}$ and $\tilde{h}_{i}=h_{i} -p(p-2)J$. The mapping obtained so far does not require any special value of the kicking strength $b$ and is valid for its generic values. As shown in Ref. \cite{Pathak:2026jfn,Osipov:2026xyr} a range $r$ kicked Ising chain is dual unitary when the maximal range ($\equiv r)$ couplings equals $|\frac{\pi}{4}|$ and $|b|= \frac{\pi}{4}$, while the shorter range couplings can remain arbitrary. Since the original $p$-body Ising model and mapped range $p-1$ Ising model, Floquet operators coincide up to a global phase, the original $p$-body model has dual unitarity for $|J|= \frac{\pi}{4}$ and $|b|= \frac{\pi}{4}$. Thus, we have accomplished the task of obtaining models containing contiguous $p$- body interactions, which are \emph{dual unitary}.  
%~~~~~~~~~~~~~~~~~~~~~~~~~~~~~~~~~~~~~~~~~~~~~

\emph{Three-body model.} We now consider a minimal case of $p=3$, for which we have the following Hamiltonian 
\begin{align}
    H_{I}^{(3B)}& = \sum_{i=1}^{L} J \sigma^{z}_{i}\sigma^{z}_{i+1} \sigma^{z}_{i+2} + \sum_{i}h_{i}\sigma_{i}^{z},\nonumber \\
    H_{K}& = \sum_{i=1}^{L} b\, \sigma^{x}_{i},
\end{align}
 with periodic boundary conditions. We first study the spacing distribution of quasi energies of the model. Let $e^{i \theta_{i}}$ denote the $i$-th eigenvalue of the unitary Floquet operator $U$, then $\theta_{i}$ are the quasi energies. Arranging the quasi energy in ascending order we obtain the difference between consecutive quasi energies i.e. $s_{i}= \theta_{i+1}-\theta_{i}$. We are interested in the distribution of the unfolded spacings i.e. $\tilde{s}= \frac{\mathcal{D}}{2\pi}s$ where $\mathcal{D} =2^{L}$ denotes the dimension of the Hilbert space of $U$. In Fig. \eqref{fig:levelspec3} we show the numerical results for the spacing distribution for the $3-$body model. We choose $L=13,14$ and consider 128 and 64 disorder realizations respectively, $J=\frac{\pi}{4}, b = -\frac{\pi}{4}$ and $h_{i}$ are chosen to be Gaussian random variable with mean $0$ and variance $\frac{\pi}{4}+1$. 

\begin{figure}[H]
    \centering
    \includegraphics[width=\linewidth]{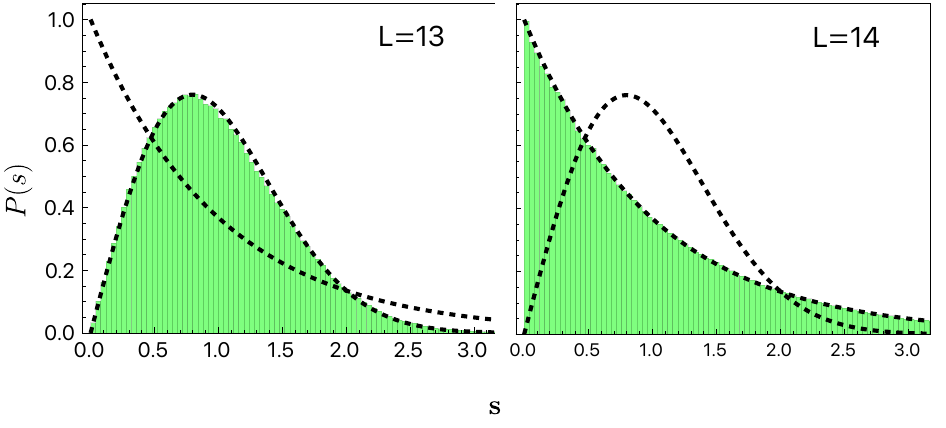}
    \caption{ Level spacing distribution of the (unfolded) quasi energies of the unitary operator for $3$-body case (a) $L =13$ (b) $L= 14$.  The black dashed curves correspond to the Wigner surmise for COE (= $\frac{1}{2} \pi  s \exp \left(\frac{1}{4} (-\pi ) s^2\right)$) and Poisson distribution (=$e^{-s}$).}
    \label{fig:levelspec3}
\end{figure}

The quasienergy spacing distribution observed is Poisson, usually associated with integrable dynamics, for even $L$, whereas it follows the Wigner-Dyson distribution corresponding to circular orthogonal ensemble (COE), usually associated with chaotic dynamics, for odd $L$. At first glance the behavior looks mysterious, however it becomes clear once we analyze the mapped range $2$ model in detail. For a chain of length $L$ the Hamiltonian of the mapped model is 
\begin{align}
    H_{I}^{(2R)}& = \frac{\pi}{4}\sum_{i=1}^{L} \sigma^{z}_{i}\sigma^{z}_{i+2}+  \sum_{i}\left(h_{i}-\frac{3\pi}{4}\right)\sigma_{i}^{z},\nonumber \\
    H_{K}& = -\frac{\pi}{4}\sum_{i=1}^{L} \, \sigma^{x}_{i}.
\end{align}
For even $L$ notice that the $H_{I}^{(2R)}$ is written as sum of the contribution from only odd and only even site 
\begin{align}
   H_{I}^{(2R)}= \frac{\pi}{4}\sum_{i\in \text{odd}} \sigma^{z}_{i}\sigma^{z}_{i+2}+  \sum_{i\in \text{odd}}\left(h_{i}-\frac{3\pi}{4}\right)\sigma_{i}^{z} + \nonumber \\
   \frac{\pi}{4}\sum_{i\in \text{even}} \sigma^{z}_{i}\sigma^{z}_{i+2}+  \sum_{i\in \text{even}}\left(h_{i}-\frac{3\pi}{4}\right)\sigma_{i}^{z}.
\end{align}
Thus, the model is equivalent to two decoupled dual unitary chains of the same form. Notice that with this mapping the interactions are nearest-neighbor within each chain. The total Floquet operator can then be decomposed as $U^{(2R)}= U_{\text{odd}}\otimes U_{\text{even}}$. If $\lambda_{i}$ and $\mu_{i}$ denote the eigenvalues of $U_{\text{odd}}$ and $U_{\text{even}}$ respectively, then the eigenvalues of $U^{(2R)}$ are given by $\lambda_{i}\mu_{j}$; $i,j=1,2,\cdots,2^{L/2}$. Thus, although each of $U_{\text{odd}}$ and $U_{\text{even}}$ can be chaotic, the total $U^{(2R)}$ can have uncorrelated spectra, which we observe. For odd $L$ however, due to the chosen periodic boundary conditions we obtain the following mapped Hamiltonian
\begin{align}
   H_{I}^{(2R)}&= \frac{\pi}{4}\sum_{i\in \text{odd}} \sigma^{z}_{i}\sigma^{z}_{i+2}+  \sum_{i\in \text{odd}}\left(h_{i}-\frac{3\pi}{4}\right)\sigma_{i}^{z} + \nonumber \\
   &\frac{\pi}{4}\sum_{i\in \text{even}} \sigma^{z}_{i}\sigma^{z}_{i+2}+  \sum_{i\in \text{even}}\left(h_{i}-\frac{3\pi}{4}\right)\sigma_{i}^{z} \nonumber \\
   &+ \frac{\pi}{4} \sigma^{z}_{L}\sigma^{z}_{2} + \frac{\pi}{4}\sigma^{z}_{L-1}\sigma^{z}_{1},\nonumber \\
    H_{I}^{(2R)}&= \frac{\pi}{4}\sum_{i=1}^{L} \sigma^{z}_{i}\sigma^{z}_{i+2}+  \sum_{i=1}^{L}\left(h_{i}-\frac{3\pi}{4}\right)\sigma_{i}^{z}.
\end{align}
In the last line we do the relabeling as : $ 2n - 1 \mapsto n \, \text{for odd sites and }
  2n \mapsto \frac{L+1}{2} +n \, \text{for even sites.} 
$
Thus we obtain a remarkable relation that the 3-body KFIM with interaction strength $|J|= \frac{\pi}{4}$ is equivalent to the standard KFIM. This explains the chaotic spectra for odd $L$. These results thus highlight the rich structure underlying the $p$-body models. This connectivity structure extends to all odd $p$ as given by the following corollary
\begin{coro}\label{corr1}
    For odd $p$, with periodic boundary conditions, $J= \frac{(2n+1) \pi}{4}$, we have the following two cases:
    \begin{itemize}
    \item For even $L$ the model is mapped, up to a global phase, to two decoupled chains each of which has maximum range $\frac{p-1}{2}$ interactions. 
        \item For odd $L$ the model is mapped, up to a global phase, to a single chain with maximum range $\frac{p-1}{2}$ interactions. 
    \end{itemize}
\end{coro}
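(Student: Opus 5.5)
The plan is to start from the explicit mapped Hamiltonian given after the Order--Range mapping theorem. Its Ising part is $H^{((p-1)R)}_{I}=J\sum_{r=1}^{p-1}(p-r)\sum_{i}\sz_i\sz_{i+r}+\sum_i\tilde h_i\sz_i$. I will then reduce the diagonal factor $e^{-iH_I^{((p-1)R)}}$ further, using that $J=\frac{(2n+1)\pi}{4}$, so that only couplings of \emph{even} range survive. Since all terms in $H_I$ are diagonal in the $\sz$ basis, the factor splits into commuting pieces $\exp[-iJ(p-r)\sz_i\sz_{i+r}]$, and I can treat each range $r$ separately.

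\textbf{Which ranges survive.} The key identity is the following. For any integer $c$, if $Jc\in\pi\mathbb{Z}$ then $e^{-iJc\,\sz_i\sz_j}=\cos(Jc)\,\mathbbm{I}=\pm\mathbbm{I}$, which is a global phase. If instead $Jc\in\frac{\pi}{2}+\pi\mathbb{Z}$, then $e^{-iJc\,\sz_i\sz_j}=\mp i\,\sz_i\sz_j$. Since $\sz_k=-i\,e^{i\frac{\pi}{2}\sz_k}$, this equals a phase times $e^{i\frac{\pi}{2}\sz_i}e^{i\frac{\pi}{2}\sz_j}$, i.e.\ a shift of the local fields $\tilde h_i,\tilde h_j$ by $\mp\pi/2$. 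Apply this with $c=p-r$ and $J=(2n+1)\pi/4$:
\begin{itemize}
\item For $p$ odd and $r$ odd, $p-r$ is even. Then $J(p-r)$ is either a multiple of $\pi$ or an odd multiple of $\pi/2$, so every odd-range coupling becomes a global phase or a field shift.
\item For $r$ even, $p-r$ is odd. Then $J(p-r)$ is an odd multiple of $\pi/4$, so these couplings are genuinely nontrivial.
\end{itemize}
In particular the maximal one, $r=p-1$ with coefficient $J$, is nontrivial. Collecting the phases into $\phi(L)$ and the shifts into new fields $h'_i$, the Floquet operator becomes $e^{i\phi'(L)}U_K e^{-iH'}$, where
\[
H'=\sum_{s=1}^{(p-1)/2}J(p-2s)\sum_i\sz_i\sz_{i+2s}+\sum_i h'_i\sz_i .
\]
This uses only $\sz$ couplings between sites at even distance. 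The uniform kick $U_K$ is untouched.

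\textbf{Parity of $L$.} For $L$ even, periodic boundary conditions preserve the parity of site indices, since $i+2s \bmod L$ has the parity of $i$. So $H'$ splits into a sum over odd sites plus a sum over even sites, and $U_K$ factorizes over sites. Hence $U=U_{\rm odd}\otimes U_{\rm even}$. Relabeling $2m-1\mapsto m$ (respectively $2m\mapsto m$) turns distance $2s$ into distance $s$. Each chain, of length $L/2$, therefore has maximal range $\frac{p-1}{2}$. For $L$ odd, $2$ is invertible mod $L$. I would use the permutation $m\mapsto 2m \bmod L$, which generalizes the relabeling used above for $p=3$. Under it, old sites $2m$ and $2m+2s$ become new sites $m$ and $m+s$ (mod $L$), so $H'$ becomes a single periodic chain with maximal range $\frac{p-1}{2}$. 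The site permutation commutes with the uniform kick $H_K$ and only permutes the generic fields, so the relabeled operator has exactly the claimed form.

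\textbf{Main obstacle.} I expect the delicate part to be bookkeeping rather than concept. First, the signs of the phases and of the field shifts must be tracked consistently, including whether the shift is $\pm\pi/2$ depending on $p-r \bmod 4$. Second, one must ensure no aliasing occurs under periodic wrap-around. This needs $L$, or $L/2$ in the even case, to exceed roughly $p-1$, so that distinct ranges $2s$ stay distinct and $i+2s\ne i$. I would state this mild size assumption explicitly. I would first verify the sign conventions on the $p=3$ case already worked out in the text, where the shift $h_i-3\pi/4$ and the single range-2 coupling must be reproduced.
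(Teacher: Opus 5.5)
Your proof is correct and follows the paper's skeleton. Odd ranges $r$ drop out because $p-r$ is even, only even-range couplings survive, and the parity of $L$ decides whether the even-distance graph splits.

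It differs from the paper in two places.

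First, the paper asserts that the odd-range factors are a pure global phase, while you allow a residual field shift of $\mp\pi/2$. Both are right, and your shifts are in fact trivial. On the periodic chain each site occurs exactly twice in $\prod_i \sz_i\sz_{i+r}$, so this product is $\mathbbm{I}$. Hence $e^{-iJ(p-r)\sum_i\sz_i\sz_{i+r}}$ is a phase whenever $J(p-r)\in\frac{\pi}{2}\mathbb{Z}$, which is the same trick the paper uses for $p=3$. Equivalently, each $h'_i$ differs from $\tilde h_i$ by a multiple of $\pi$, so the sign bookkeeping you list as the main obstacle is unnecessary.

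Second, for odd $L$ the paper sorts sites into residue classes mod $p-1$ and argues that the wrap-around bonds $(L,p-1),(L-1,p-2),\dots$ link the two parity classes. That shows connectivity, but it does not by itself exhibit a chain of range $\frac{p-1}{2}$ for general $p$; an explicit relabeling is given only for $p=3$. Your global permutation $m\mapsto 2m \bmod L$ reproduces the paper's $p=3$ relabeling up to a translation. It sends every range-$2s$ bond to a range-$s$ bond simultaneously, so it proves the odd-$L$ case directly and is the cleaner argument.

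The size assumption you propose is harmless but not needed for the maximum-range claim, since wrap-around can only shorten ranges.
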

See \cite{supp} for a proof. This further implies that for any odd $p$ and even $L$ the level statistics will be Poisson. 
%~~~~~~~~~~~~~~~~~~~~~~~~~~~~~~~~~~~~~~~~~~~~~

\emph{Exact entanglement dynamics.} 
We next consider the exact entanglement dynamics of the $p$-body model. The initial state chosen is a product state, as follows:
\begin{equation}
    \ket{\psi_{\theta,\phi}}= \bigotimes_{k=1}^{L} \,\left(\cos\left(\frac{\theta_{k}}{2}\right) \ket{\uparrow} + e^{i\phi_{k}} \sin\left(\frac{\theta_{k}}{2}\right) \ket{\downarrow}\right),
\end{equation}
where $\ket{\uparrow}$ and $\ket{\downarrow}$ denote the eigenstates of $\sz$, $\theta_{k}\in [0,\pi]$ and $\phi_{k}\in [0,2\pi] $. We specifically consider the two classes of states: transverse ($\mathcal{T}$) and longitudinal ($\mathcal{L}$), which are specified by the following parameters
\begin{align}
    &\mathcal{T}= \{\ket{\psi_{\pi/2,\phi}}, \forall\, \theta_{k}= \frac{\pi}{2}, \phi_{k} \in [0,2\pi]\},  \\
    &\mathcal{L}= \{\ket{\psi_{\bar{\theta},\phi}}, \bar{\theta}=\{ 0,\pi\}, \phi_{k} \in [0,2\pi]\}.
\end{align}
These states are called solvable states as they allow for exact determination of the entanglement dynamics \cite{Bertini:2018fbz,Pathak:2026jfn}. Other states which do not belong to these solvable classes will be called generic states.
To quantify the entanglement dynamics of the system we evaluate the $\alpha$-R\'enyi entropies which are given as follows
\begin{equation}
    S^{(\alpha)}_{A}(t)= \frac{1}{1-\alpha} \ln\tr[(\rho_{A}(t))^{\alpha}]\quad \alpha > 0.
\end{equation}
where $\rho_{A}(t)$ is the time evolved reduced density matrix corresponding to a contiguous set of $N$ spin; $A= \{1,2,\cdots,N\}$. Since the $p$- body model is mapped to range $p-1$ model which is dual unitary, the $\alpha$-R\'enyi entropy for all times and initial solvable states of $\mathcal{T}$-type, is given as \cite{Pathak:2026jfn}
\begin{align}\label{eq:eefomula}
   \lim_{L\rightarrow\infty} S^{(\alpha)}_{A}(t)= \min(2(p-1)t,N) \ln(2).
\end{align}
\begin{figure}[!t]
    \centering
    \includegraphics[width= 0.95\linewidth]{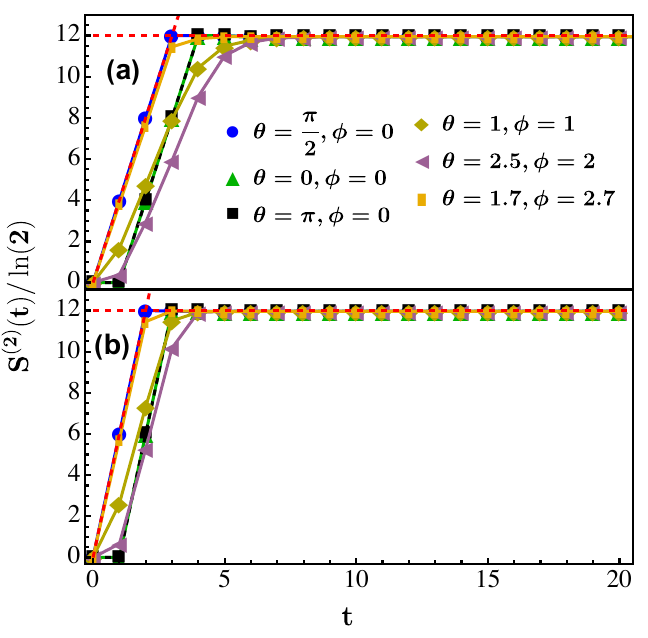}
    \caption{Time evolution of $2$-R\'enyi entanglement entropy (in units of $\ln(2)$) for (a) $p= 3$ model with $h_{i}= \frac{\pi}{4}$ (b) $p= 4$ model with $h_{i}=1$. In both cases the total system size is $L= 29$ and subsystem size is $N=12$. The initial states are of $\mathcal{T}$-type, $\mathcal{L}$-type and generic type with their parameter values given in the legend. The inclined dashed lines indicate the predicted linear growth of $4t$ in (a)  and $6t$ in (b), while the horizontal dashed lines correspond to the saturation value of $N$.}
    \label{fig:eekfimp34}
\end{figure}
The results for $\mathcal{L}$-type states are same as that for $\mathcal{T}$-type states but delayed by a period.
In Fig. \eqref{fig:eekfimp34} we numerically study the spread of entanglement in $p=3,4$ body models. The initial states considered belong to $\mathcal{T}$ class, $\mathcal{L}$ class and some generic states. The total system size is $L=29$, sub-system size $N=12$ and $J = \frac{\pi}{4}, b= - \frac{\pi}{4}$. For the states belonging to $\mathcal{T}$ or $\mathcal{L}$ classes the growth regime and saturation for both $p=3,4$ are in agreement with Eq. \eqref{eq:eefomula} (delayed by a period for $\mathcal{L}$-type states). For generic states the observed slope deviates from $2(p-1)$, while the saturation values equal $N$, in the units of $\ln(2)$. These differences in slopes are however expected to go away in the limit of $N \rightarrow \infty$ \cite{Pathak:2026jfn}. It is however important to note one subtlety for the case of $p=3$, where we previously have shown that the model is equivalent to the standard KFIM. However, the entanglement entropy growth observed is $4t\ln(2)$. The resolution to this puzzle is in the relabeling of the sites. The numerical calculations were performed using $3-$body operators, thus the sites $1,2,3,\cdots,L$ are contiguous while after the relabeling these sites are not contiguous. Thus in the standard KFIM picture the entanglement entropy is obtained for bi-partitions which are disjoint, thus the entropy calculation requires twice the number of cuts, which explains the observed behavior.  
%~~~~~~~~~~~~~~~~~~~~~~~~~~~~~~~~~~~~~~~~~~~~~~~~

\emph{Discussion and Outlook.} We introduced a periodically kicked version of a one-dimensional spin-$1/2$ Ising model with contiguous $p$-body interactions.. For specific values of interaction strengths we established an exact interaction \emph{order-range} duality for this model which maps the $p$-body model to a model which consists of only 2-body interactions having maximum range $p-1$. Furthermore, for special values of the kicking strength the mapped long range model becomes dual unitary as well, thus making the original $p$-body model dual unitary. This is a first concrete realization of a contiguous $p$-body kicked-Ising Floquet model which is also dual unitary. We then take the minimal case of $p=3$ and total size $L$ and show that the mapped model is either two disconnected chains or a single connected chain depending on whether $L$ is even or odd respectively. It is also shown that for odd $L$ the model is exactly mapped to the standard KFIM, after proper labeling of the sites. Owing to the dual unitarity of the model we obtain exact expressions for all $\alpha$- R\'enyi entropies for all times and verify the results with numerical simulations. Our work thus rigorously established a possible way of constructing $p$-body dual unitary Floquet systems.

Being first such construction of $p$-body dual unitary model, our work  motivates further studies. An important question is to understand the scenarios where the interaction order-range mapping is valid. Although explicitly presented for spin-$1/2$ $p$-body KFIM, we expect it to hold for a $p$-body generalization of the kicked Potts model for $q=3,4$ \cite{PhysRevB.105.144306,Claeys:2024tuy}. This will then naturally lead to the construction of $p$- body dual unitary models with local dimensions$=3,4$. It is also important to understand the microscopic origin of the interaction order-range mapping. A plausible way would be to consider the $p$- local gate corresponding to the $p$-body model and understand its decomposition into range $p-1$ gates. This might be also related to the decomposition of the diagonal gates which have been explored well in recent times \cite{deBeaudrap:2020oef, Cui:2016bxt,Sriluckshmy:2023leq}. This will pave the way to construct generalized circuit models with $p$-body interactions which are dual unitary. Finally we remark that the interaction order-range mapping is valid independently of the dual unitarity and it remains to find its use in more generic situations.

\emph{Acknowledgments.}We thank Toma\v{z} Prosen for his useful comments. We also thank Abhik for his suggestions to improve the presentation of the manuscript. The author is supported by JST CREST (Grant No. JPMJCR24I2). Numerical calculations have been performed using the computational facilities of the Yukawa Institute for Theoretical Physics. The author also acknowledges use of GPT 5.5 for suggesting a very special case of Lemma \eqref{lemma1} which then led to its current generalized form.
\bibliography{references}

%~~~~~~~~~~~~~~~~~~~~~~~~~~~~~~~~~~~~~~~~~~~~~~~~
% \clearpage
\onecolumngrid
% \hbox{}\thispagestyle{empty}\newpage

\onecolumngrid
\vspace{1cm}
\raggedbottom
\begin{center}
\textbf{\large Supplemental Material for:\\
$p$-Body $\simeq$ Range $p-1$: Exact Order–Range Mapping and Dual-Unitarity}
\vspace{2ex}

{\normalsize Tanay Pathak$^{1}$ \,\,\href{https://orcid.org/0000-0003-0419-2583}
{\includegraphics[scale=0.05]{orcidid.pdf}}}\\
\small $^{1}$Department of Physics, Kyoto University, Kitashirakawa Oiwakecho, Sakyo-ku, Kyoto 606-8502, Japan

\end{center}
\appendix
In this supplemental material, we provide proofs of Lemma 2 and Theorem 1  of the main text. In particular:
\begin{itemize}
\item In Section \ref{ssec:lemma2proof} we prove Lemma 2 of the main text. 
\item In Section \ref{ssec:thm1proof} we prove Theorem 1 of the main text. 
\item In Section \ref{ssec:cor1proof} we prove Corollary 1 of the main text.
\end{itemize}

\section{Proof of Lemma 2}\label{ssec:lemma2proof}
\begin{replemma}\label{suppsec:lemma2}
Let $\sz_{i}$, denote the $\sz$ operator acting non-trivially on the $i$-th site of length $L$ chain, then following holds.
    \begin{equation}\label{suppsec:claim1}
    \prod_{k=1}^{p}(1-\sz_{i_{k}})= 1 - \sum_{k=1}^{p}\sz_{i_{k}} + \sum_{1 \leq k_{1}<k_{2}\leq p }^{p}\sz_{i_{k_{1}}}\sz_{i_{k_{2}}}- \cdots + (-1)^{p}\sz_{i_{1}}\sz_{i_{2}}\cdots \sz_{i_{p}}. 
\end{equation}
\end{replemma}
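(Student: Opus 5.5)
The plan is to prove the expansion by induction on $p$, using only that the operators $\sz_{i_1},\dots,\sz_{i_p}$ act on distinct sites and therefore commute. Because they commute, the product $\prod_{k=1}^{p}(1-\sz_{i_k})$ can be expanded exactly like a polynomial in commuting variables $x_k = \sz_{i_k}$. The ordering $i_1<\cdots<i_p$ only serves to index each monomial uniquely. The claimed identity is the elementary-symmetric-polynomial expansion
\[
\prod_{k=1}^{p}(1-x_k)=\sum_{m=0}^{p}(-1)^m e_m(x_1,\dots,x_p),\qquad e_m=\sum_{1\le k_1<\cdots<k_m\le p}x_{k_1}\cdots x_{k_m},
\]
with $e_0=1$ and $e_p=x_1\cdots x_p$.

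The base case $p=1$ is trivial: $1-\sz_{i_1}$ equals $1-e_1$. For the inductive step, assume the expansion holds for $p-1$ and multiply by $(1-\sz_{i_p})$ on the right. Using commutativity, collect the resulting terms into
\[
\sum_{m}(-1)^m e_m(x_1,\dots,x_{p-1}) \;+\; \sum_{m}(-1)^{m+1} e_m(x_1,\dots,x_{p-1})\,x_p .
\]
Then apply the Pascal-type recursion
\[
e_m(x_1,\dots,x_p)=e_m(x_1,\dots,x_{p-1})+x_p\,e_{m-1}(x_1,\dots,x_{p-1}).
\]
This recursion says that each $m$-subset of $\{1,\dots,p\}$ either omits $p$ or contains it. It gives $\sum_m(-1)^m e_m(x_1,\dots,x_p)$, which is the right-hand side of the Lemma, including the last term $(-1)^p\sz_{i_1}\cdots\sz_{i_p}$.

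An alternative route, which also serves as a cross-check, is to compare the two sides on each $\sz$ eigenbasis state $\psi_s$, as in the proof of Lemma~\ref{lemma1}. Both sides are diagonal in this basis. On $\psi_s$ the left side becomes the scalar $\prod_k(1-s_{i_k})$, and the right side becomes the same scalar expanded by the ordinary binomial-type identity for numbers. Hence the two operators agree.

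I do not expect a real obstacle. The only point requiring care is stating explicitly that the $\sz_{i_k}$ commute, and never using $(\sz)^2=1$, which is not needed here since the sites are distinct. The sign bookkeeping $(-1)^m$ in the inductive step is the only routine computation left.
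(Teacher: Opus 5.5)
Your proof is correct and follows essentially the same route as the paper, which also inducts on the number of factors by multiplying the $n$-factor expansion by $(1-\sigma^{z}_{i_{n+1}})$ and regrouping terms. Your explicit appeal to commutativity and to the recursion $e_m(x_1,\dots,x_p)=e_m(x_1,\dots,x_{p-1})+x_p\,e_{m-1}(x_1,\dots,x_{p-1})$ makes precise the term-collection step that the paper leaves implicit, and your eigenbasis comparison is a valid but unnecessary independent check.
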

\emph{Proof.} We proceed via induction. Let $P(n)$ be the given statement 
$$P(n): \prod_{k=1}^{n}(1-\sz_{i_{k}})= 1 - \sum_{k=1}^{n}\sz_{i_{k}} + \sum_{1 \leq k_{1}<k_{2}\leq n }\sz_{i_{k_{1}}}\sz_{i_{k_{2}}}- \cdots + (-1)^{n} \sz_{i_{1}}\sz_{i_{2}}\cdots \sz_{i_{n}}.$$
For $n=1$: LHS = $1-\sz_{i_{1}}$, hence $P(1)$ is true. We now assume $P(n)$ is true for some positive integer $n$. To conclude the proof we need to show that $P(n+1)$ is true. For $(n+1)$ we have 
\begin{align}
 \prod_{k=1}^{n+1}(1-\sz_{i_{k}}) &= \left( \prod_{k=1}^{n}(1-\sz_{i_{k}})\right)(1-\sz_{i_{n+1}}), \nonumber \\
 &= \left(1 - \sum_{k=1}^{n}\sz_{i_{k}} + \sum_{1 \leq k_{1}<k_{2}\leq n }\sz_{i_{k_{1}}}\sz_{i_{k_{2}}}- \cdots + (-1)^{n}\sz_{i_{1}}\sz_{i_{2}}\cdots \sz_{i_{n}}\right)(1-\sz_{i_{n+1}}), \nonumber \\
 &= \left(1 - \sum_{k=1}^{n}\sz_{i_{k}} + \sum_{1 \leq k_{1}<k_{2}\leq n }\sz_{i_{k_{1}}}\sz_{i_{k_{2}}}- \cdots + (-1)^{n}\sz_{i_{1}}\sz_{i_{2}}\cdots \sz_{i_{n}}\right), \nonumber \\
 &- \left(\sz_{i_{n+1}} - \sum_{k=1}^{n}\sz_{i_{k}}\sz_{i_{n+1}} + \sum_{1 \leq k_{1}<k_{2}\leq n }\sz_{i_{k_{1}}}\sz_{i_{k_{2}}}\sz_{i_{n+1}}- \cdots + (-1)^{n+1}\sz_{i_{1}}\sz_{i_{2}}\cdots \sz_{i_{n}\sz_{i_{n+1}}}\right), \nonumber \\
 &= 1 - \sum_{k=1}^{n+1}\sz_{i_{k}} + \sum_{1 \leq k_{1}<k_{2}\leq n+1 }\sz_{i_{k_{1}}}\sz_{i_{k_{2}}}- \cdots + (-1)^{n+1}\sz_{i_{1}}\sz_{i_{2}}\cdots \sz_{i_{n}}\sz_{i_{n+1}},
\end{align}
which is exactly the result for $(n+1)$. This concludes the proof.
%~~~~~~~~~~~~~~~~~~~~~~~~~~~~~~~~~
\section{Proof of Theorem 1}\label{ssec:thm1proof}
\begin{thm*}[Order--Range mapping]
   For a $p$-body Ising model of size $L$, with total Floquet evolution operator $U^{(pB)}_{\text{KFIM}}$, and with $J= \frac{n \pi}{4};n\in\mathbb{Z}$, we have the following exact relation for the Floquet evolution operators
    \begin{equation}
       U^{(pB)}_{\text{KFIM}}= \exp[i \phi(L)] U^{((p-1)R)},
    \end{equation}
where $\phi(L)$ is a system size dependent global phase and $U^{((p-1)R)}$ is the Floquet evolution operator of a spin-$1/2$ model with $2$-body interactions of maximum range $p-1$.
\end{thm*}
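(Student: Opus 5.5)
Since $U_K$ is untouched, it suffices to show that $U_I^{(pB)}=e^{i\phi(L)}\,e^{-iH_I^{((p-1)R)}}$. Both operators are diagonal in the $\sigma^z$ basis, so everything commutes and the claim reduces to an identity between phases on each basis state $\psi_s$. My approach is to strip off the $p$-body term one order at a time using Lemma \ref{lemma1}, each time with the weakest admissible order $q=3$.

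\emph{Key identity.} Take $J=n\pi/4$ and any three distinct sites $a<b<c$. By Lemma \ref{lemma1} with $p=3$, $e^{-iJ\prod(1-\sigma^z)}=\mathbb{I}$. Expanding the product with Lemma \ref{lemma2} gives, as an operator identity up to the constant phase $e^{-iJ}$,
\begin{equation*}
e^{iJ\sigma^z_a\sigma^z_b\sigma^z_c}=e^{-iJ}\,e^{iJ(\sigma^z_a+\sigma^z_b+\sigma^z_c)}\,e^{-iJ(\sigma^z_a\sigma^z_b+\sigma^z_a\sigma^z_c+\sigma^z_b\sigma^z_c)}.
\end{equation*}
The same holds with $J\to-J$. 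More importantly, it holds when $\sigma^z_a$ is replaced by any product $Z_S=\prod_{k\in S}\sigma^z_k$ over a set $S$ disjoint from $\{b,c\}$. The reason is that the proof of Lemma \ref{lemma1} only uses that each factor has eigenvalues $\pm1$ and that the factors commute.

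\emph{Recursion.} Write the $p$-body string on sites $j,\dots,j+p-1$ as $Z_S\,\sigma^z_{j+p-2}\sigma^z_{j+p-1}$, with $S=\{j,\dots,j+p-3\}$. Applying the identity splits $e^{-iJZ_{S\cup\{j+p-2,j+p-1\}}}$ into a product of exponentials of the following strings, each with coupling $\pm J$:
\begin{itemize}
\item $Z_S$ of order $p-2$;
\item $Z_S\sigma^z_{j+p-2}$ and $Z_S\sigma^z_{j+p-1}$ of order $p-1$;
\item single spins;
\item the pair $\sigma^z_{j+p-2}\sigma^z_{j+p-1}$.
\end{itemize}
I then induct on the order. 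Every string of order $\ge3$ gets reduced again in the same way. All generated strings stay supported inside the original window $[j,j+p-1]$, so every surviving two-body term has range at most $p-1$. The procedure terminates because each step strictly lowers the order of the pieces it produces. Every step contributes only a constant phase, so the accumulated phases over the $L$ translates give the global phase $\phi(L)$. Because the odd powers of $\sigma^z$ that appear are just $\pm1$ phases absorbed into couplings, the end result is $\exp\!\big(-i\sum_{i,r\le p-1}J_r\sigma^z_i\sigma^z_{i+r}-i\sum_i c\,\sigma^z_i\big)$, which together with the original $h_i$ terms is exactly $U_I^{((p-1)R)}$.

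\emph{Main obstacle: the explicit couplings.} Existence of the range-$(p-1)$ form is clear from the recursion, but getting the closed form $J(p-r)$ and $\tilde h_i$ by tracking signs through the recursion is messy. I would do it directly instead. The identity $e^{-iJ\prod_{k\in T}(1-\sigma^z_k)}=\mathbb{I}$ holds for every $|T|\ge3$, because $2^{|T|}J\in 2\pi\mathbb{Z}$ for $|T|\ge3$. Use it to write $Z_W=(-1)^{p}\prod_{k\in W}\big(1-(1-\sigma^z_k)\big)$ for the window $W$. After expanding, every monomial of degree $\ge3$ in the operators $(1-\sigma^z_k)$ drops out modulo $2\pi$. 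What remains is
\begin{equation*}
Z_W\equiv(-1)^p\Big[1-\sum_k(1-\sigma^z_k)+\sum_{k<l}(1-\sigma^z_k)(1-\sigma^z_l)\Big].
\end{equation*}
In this form the pair $(k,l)$ at distance $r$ occurs in exactly $p-r$ windows, which reproduces the couplings $J(p-r)$. Summing the single-site coefficients over windows gives $\tilde h_i$, up to sign and phase conventions for odd versus even $p$, which I would check separately.
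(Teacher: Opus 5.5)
Your proposal is correct. It contains two arguments, only one of which follows the paper.

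\textbf{The recursion is the paper's own mechanism.} The supplement proves the Claim $e^{-i\frac{\pi}{4}\sigma^z_{i_1}\cdots\sigma^z_{i_p}}=e^{-i\frac{\pi}{4}[\binom{p-1}{2}-(p-2)\sum_k\sigma^z_{i_k}+\sum_{k_1<k_2}\sigma^z_{i_{k_1}}\sigma^z_{i_{k_2}}]}$ by induction on $p$. It writes the string as $A\,\sigma^z_{i_p}\sigma^z_{i_{p+1}}$, with $A$ a $(p-1)$-fold product, and applies the three-body identity to it. That is exactly your $Z_S\sigma^z_b\sigma^z_c$ step. Your remark that only the $\pm1$ spectrum and commutativity matter supplies a justification the paper leaves implicit. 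The paper carries the closed form through the induction, which costs two hand-computed base cases ($p=3,4$) and hypotheses at both orders $p-1$ and $p$. It handles general $n$ by raising the $J=\pi/4$ identity to the $n$-th power.

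\textbf{Your direct route is genuinely different and cleaner.} Set $\nu_k=1-\sigma^z_k$, with spectrum $\{0,2\}$, and expand $Z_W=\prod_{k\in W}(1-\nu_k)$. Every monomial of degree $\ge3$ drops at once, since each exponentiates to $\mathbb{I}$ because $2^{|T|}J\in2\pi\mathbb{Z}$. What remains is $1-\sum_k\nu_k+\sum_{k<l}\nu_k\nu_l=\binom{p-1}{2}-(p-2)\sum_k\sigma^z_k+\sum_{k<l}\sigma^z_k\sigma^z_l$. This is the paper's Claim in one line, valid for every $n$ simultaneously and without induction. It also shows why the reduction stops exactly at two-body terms: $4J\notin2\pi\mathbb{Z}$ for odd $n$. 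Counting windows then gives the couplings $J(p-r)$, the fields $\tilde h_i=h_i-p(p-2)J$, and $\phi(L)=-L\binom{p-1}{2}J$. With this in hand the recursion is no longer needed.

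\textbf{Two sign slips need fixing.} Neither affects the existence argument, which only uses the shape of the splitting, but both matter for the explicit couplings.
\begin{itemize}
\item Lemma 1 with $p=3$ gives $e^{-iJ\sigma^z_a\sigma^z_b\sigma^z_c}=e^{-iJ}e^{iJ(\sigma^z_a+\sigma^z_b+\sigma^z_c)}e^{-iJ(\sigma^z_a\sigma^z_b+\sigma^z_a\sigma^z_c+\sigma^z_b\sigma^z_c)}$. With $e^{+iJ\sigma^z_a\sigma^z_b\sigma^z_c}$ on the left, as you wrote it, the two sides differ by $e^{2iJ}$ on the all-up state.
\item $\prod_{k\in W}(1-(1-\sigma^z_k))$ is already $Z_W$, so the prefactor $(-1)^p$ must go. Keeping it flips the sign of every induced coupling, field shift and phase for odd $p$. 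For $p=3$ on the all-up state, the left side is $e^{-iJ}$ while your expression gives $e^{+iJ}$. Once it is removed, there is no parity-dependent sign left to check.
\end{itemize}
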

\emph{Proof.} It is possible to prove the above theorem via induction. However, our interest is also to obtain the resulting range $p-1$ model so we prove it by explicitly obtaining the corresponding range $p-1$ model.

We restrict the proof to $J=  \frac{\pi}{4}$. This is sufficient because all operators appearing in the interaction part commute. Thus for any $n\in \mathbb{Z}$, we have
\begin{equation}
 \exp\left[-i \frac{n \pi}{4} \sum_{i=1}^{L}\sz_{i}\sz_{i+{1}}\cdots\sz_{i+{p-1}} \right]= \exp\left[-i \frac{ \pi}{4} \sum_{i=1}^{L}\sz_{i}\sz_{i+{1}}\cdots\sz_{i+{p-1}} \right]^{n}.
\end{equation}
Therefore, the result for $J=n\pi/4$ follows by taking the $n$-th power of the identities established for $J=\pi/4$.
We next consider a few specific values of $p$ to gain insight on the nature of the resulting decomposition.
%~~~~~~~~~~~~~~~~~~~~~~~~~~~~~~~~~~~~~~~~~~~~~~~~~~~~~~~~~~~~~~~~~~
\subsection{Three-body case}\label{ssec:3bodycase}
Let us first consider the first case of $p=3$, for which we have the following 
\begin{align}
    H^{(3B)}_{I}& = \sum_{i=1}^{L} J \sigma^{z}_{i}\sigma^{z}_{i+1} \sigma^{z}_{i+2} + \sum_{i}h_{i}\sigma_{i}^{z},\nonumber \\
    H_{K}& = \sum_{i=1}^{L} b\, \sigma^{x}_{i}.
\end{align}
And the total Floquet operator is 
\begin{equation}\label{seq:pbodyu}
    U^{(3B)}_{\text{KFIM}}= U_{K}U_{I}, \quad pB \equiv \text{p-body}.
\end{equation}
Next, we consider the $U_{I}$ part of the total Floquet operator and use Lemma 1 to obtain the following decomposition 

\begin{equation}\label{seq:decomp3}
  U_{I}= e^{-i\frac{\pi}{4}\sum \sz_{i}\sz_{i+1}\sz_{i+2} -i \sum_{i=1}^{L}h_{i}\sz_{i} }= e^{-i\frac{\pi}{4}\sum \sz_{i}\sz_{i+1}\sz_{i+2}}e^{-i\frac{\pi}{4}\sum_{i=1}^{L}(1-\sz_{i})(1-\sz_{i+1})(1-\sz_{i+2}) -i \sum_{i=1}^{L}h_{i}\sz_{i} }. 
\end{equation}
From identity \eqref{suppsec:claim1} for $p=3$, we obtain the following
\begin{equation}
(1- \sz_{i_1})(1-\sz_{i_2})(1-\sz_{i_3})= 1- (\sz_{i_1}+\sz_{i_2}+\sz_{i_3}) + (\sz_{i_1}\sz_{i_2} +\sz_{i_1}\sz_{i_3}+\sz_{i_2}\sz_{i_3}) - \sz_{i_{1}}\sz_{i_{2}}
\sz_{i_{3}}.
\end{equation}
Using the above we obtain the following for the RHS of Eq. \eqref{seq:decomp3}
\begin{align}
   % e^{-\frac{\pi}{4}\sum \sz_{i}\sz_{i+1}\sz_{i+2}}&e^{-\frac{\pi}{4}\sum_{i=1}^{L}(1-\sz_{i})(1-\sz_{i+1})(1-\sz_{i+2}) + \sum_{i=1}^{L}h_{i}\sz_{i} }
   \text{RHS}
   &= \exp\left[-i\frac{\pi}{4}\sum_{i=1}^{L} ( 1 - (\sz_{i}+\sz_{i+1}+\sz_{i+2})+\sigma_{i}\sigma_{i+1}+\sigma_{i+1}\sigma_{i+2}+\sigma_{i}\sigma_{i+2}) - i\sum_{i=1}^{L}h_{i}\sz_{i}  \right], \nonumber \\
   &= \exp\left[-i\frac{\pi}{4}L+i \frac{3\pi}{4} \sum_{i=1}^{L} \sz_{i}-i\frac{\pi}{2} \sum_{i=1}^{L}\sigma_{i}\sigma_{i+1}-i\frac{\pi}{4} \sum_{i=1}^{L}\sigma_{i}\sigma_{i+2} -i \sum_{i=1}^{L}h_{i}\sz_{i} \right], \nonumber \\
   &= (-i)^{L}\exp\left[-i\frac{\pi}{4}L-i\frac{\pi}{4} \sum_{i=1}^{L}\sigma_{i}\sigma_{i+2} -i \sum_{i=1}^{L}(h_{i}-\frac{3\pi}{4})\sz_{i} \right],
\end{align}
where in the last line we use the property that $e^{-i\frac{\pi}{2}\sum_{i=1}^{L}\sz_{i}\sz_{i+1}}= \prod_{i=1}^{L}e^{-i\frac{\pi}{2}\sz_{i}\sz_{i+1}}= (-i)^{L}$. We have thus shown  that the discrete time evolution of $3-$body model is equivalent to the discrete time evolution of a $2-$body with range $2$ interactions. The model is explicitly given as follows 
\begin{align}\label{seq:2rmodel}
    H^{(2R)}_{I}& = \frac{\pi}{4} \sum_{i=1}^{L} \sigma^{z}_{i} \sigma^{z}_{i+2} + \sum_{i}\left(h_{i}-\frac{3 \pi}{4}\right)\sigma_{i}^{z},\nonumber \\
    H_{K}& = b\sum_{i=1}^{L} \, \sigma^{x}_{i}.
\end{align}
and the total Floquet operator is 
\begin{equation}\label{seq:2rangeu}
    U^{(2R)}_{\text{KFIM}}= e^{-iH_{K}}e^{-iH^{(2R)}_{I}}.
\end{equation}
We note that model given by Eq. \eqref{seq:2rmodel} for $|b|= \frac{\pi}{4}$ is dual unitary as shown in Ref. \cite{Pathak:2026jfn,Osipov:2026xyr}.
%~~~~~~~~~~~~~~~~~~~~~~~~~~~~~~~~~~~~~~~~~~~~~~~~~~~~~~~~~~~~~~~~~~~~~~
\subsection{Four-body case}\label{ssec:4bodycase}
We next consider the case of $p=4$, for which we have the following Hamiltonian

\begin{align}
    H^{(4B)}_{I}& = \sum_{i=1}^{L} J \sigma^{z}_{i}\sigma^{z}_{i+1} \sigma^{z}_{i+2}\sigma^{z}_{i+3} + \sum_{i}h_{i}\sigma_{i}^{z},\nonumber \\
    H_{K}& = \sum_{i=1}^{L} b\, \sigma^{x}_{i}.
\end{align}
We set $J=\frac{\pi}{4}$ and $b$ is arbitrary. And the total Floquet operator is 
\begin{equation}\label{seq:pbodyu}
    U^{(4B)}_{\text{KFIM}}= U_{K}U_{I}, \quad pB \equiv \text{p-body}.
\end{equation}

Focusing on the $U_{I}$ part of the total Floquet operator and using Lemma 1 we obtain the following decomposition
\begin{equation}\label{seq:decomp4}
 e^{-i\frac{\pi}{4}\sum \sz_{i}\sz_{i+1}\sz_{i+2}\sz_{i+3} - i \sum_{i=1}^{L}h_{i}\sz_{i} }= e^{-i\frac{\pi}{4}\sum \sz_{i}\sz_{i+1}\sz_{i+2}\sz_{i+3}}e^{-i(-\frac{\pi}{4})\sum_{i=1}^{L}(1-\sz_{i})(1-\sz_{i+1})(1-\sz_{i+2})(1-\sz_{i+3}) -i \sum_{i=1}^{L}h_{i}\sz_{i} }. 
\end{equation}
Using identity \eqref{suppsec:claim1} for $p=4$, we obtain the following
\begin{align}
  \prod_{k=1}^{4} (1-\sz_{i_k})&= 1- (\sz_{i_1}+\sz_{i_2}+\sz_{i_3}+\sz_{i_{4}}) + (\sz_{i_1}\sz_{i_2} +\sz_{i_1}\sz_{i_3}+\sz_{i_1}\sz_{i_4}+\sz_{i_2}\sz_{i_3}+\sz_{i_2}\sz_{i_4}+\sz_{i_3}\sz_{i_4}) \nonumber \\
  &- (
\sz_{i_{1}}\sz_{i_{2}}\sz_{i_{3}}+
\sz_{i_{1}}\sz_{i_{2}}\sz_{i_{4}}+
\sz_{i_{1}}\sz_{i_{3}}\sz_{i_{4}}+
\sz_{i_{2}}\sz_{i_{3}}\sz_{i_{4}})+ 
\sz_{i_{1}}\sz_{i_{2}}\sz_{i_{3}}\sz_{i_{4}}.   
\end{align}
Further, using identity  \eqref{suppsec:claim1} for $p=3$ recursively for every term, we obtain following for Eq. \eqref{seq:decomp4}
\begin{align}
&\exp\left[-i\frac{\pi}{4}\sum \sz_{i}\sz_{i+1}\sz_{i+2}\sz_{i+3} -i \sum_{i=1}^{L}h_{i}\sz_{i} \right] \nonumber \\
&= \exp\left[ -i \frac{\pi}{4}\left(3L -2\sum_{i=1}^{L} \sum_{k=0}^{3}\sz_{i+k} + \sum_{i=1}^{L} \sum_{k=0}^{2}\sz_{i+k}\sz_{i+k+1}+ \sum_{i=1}^{L} \sum_{k=0}^{1}\sz_{i+k}\sz_{i+k+2}+\sum_{i=1}^{L} \sz_{i}\sz_{i+3}\right)-i \sum_{i=1}^{L}h_{i}\sz_{i}\right], \nonumber \\
&= \exp\left[ -i \frac{\pi}{4}\left(3L -8\sum_{i=1}^{L}\sz_{i} + 3\sum_{i=1}^{L} \sz_{i}\sz_{i+1}+ 2\sum_{i=1}^{L} \sz_{i}\sz_{i+2}+\sum_{i=1}^{L} \sz_{i}\sz_{i+3}\right)-i \sum_{i=1}^{L}h_{i}\sz_{i}\right],\nonumber \\
&\simeq \exp\left[ -i \left(\frac{3\pi}{4}\sum_{i=1}^{L} \sz_{i}\sz_{i+1}+\frac{\pi}{4}\sum_{i=1}^{L} \sz_{i}\sz_{i+3}\right)-i \sum_{i=1}^{L}h_{i}\sz_{i}\right],
\end{align}
where in the last line $\simeq$ implies that the expressions are equivalent up to a global phase. Thus we have explicitly mapped the discrete time evolution of $4-$body model to the discrete time evolution of a model with $2-$body interactions having maximal range of 3. The model is explicitly given as follows 
\begin{align}\label{seq:3rmodel}
    H^{(3R)}_{I}&= \frac{3\pi}{4} \sum_{i=1}^{L} \sigma^{z}_{i}\sigma^{z}_{i+1} +\frac{\pi}{4} \sum_{i=1}^{L} \sigma^{z}_{i} \sigma^{z}_{i+3} + \sum_{i} h_{i} \sigma_{i}^{z},\nonumber \\
    H_{K}& = b\sum_{i=1}^{L} \, \sigma^{x}_{i}.
\end{align}
and the total Floquet operator is 
\begin{equation}\label{seq:2rangeu}
    U^{(3R)}_{\text{KFIM}}= e^{-iH_{K}}e^{-iH^{(3R)}_{I}}.
\end{equation}
The mapping is true for arbitrary $b$. For $|b|= \frac{\pi}{4}$ the model can be shown to be dual unitary, as shown in Ref. \cite{Pathak:2026jfn,Osipov:2026xyr}.
%~~~~~~~~~~~~~~~~~~~~~~~~~~~~~~~~~~~~~~~~~~~~~~~~~~~~~~~~~~~~~~~~~~~~~~
\subsection{$p$-body case}
To proceed with the general $p$- body case we first show that following identity holds
\begin{claim}
For $p >3$ we have the following
\begin{equation}
   \exp\left[-i \frac{\pi}{4} \sz_{i_{1}}\sz_{i_{2}}\cdots\sz_{i_{p}} \right]= \exp\left[-i \frac{\pi}{4}\left( \binom{p-1}{2}-  (p-2) \sum_{k=1}^{p} \sz_{i_{k}} +\sum_{1\leq k_1 < k_{2} \leq p} \sz_{i_{k_1}}\sz_{i_{k_2}}\right)\right].
\end{equation}
\end{claim}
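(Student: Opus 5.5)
Both sides of the Claim are functions of the commuting operators $\sz_{i_1},\dots,\sz_{i_p}$ only, so they are simultaneously diagonal in the basis $\psi_s=\ket{s_1,\dots,s_L}$ used in the proof of Lemma \ref{lemma1}. I will therefore check the identity eigenvalue by eigenvalue. Since both sides have the form $e^{-i\frac{\pi}{4}x}$ with $x$ an integer eigenvalue, it suffices to show that the two exponents agree modulo $8$ on every basis state. Rather than iterating Lemma \ref{lemma1} and Lemma \ref{lemma2} term by term, as was done for $p=4$, I would use a direct counting argument, because the right-hand side is symmetric in the $p$ sites.

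Fix a basis state and let $m\in\{0,\dots,p\}$ be the number of sites among $i_1,\dots,i_p$ with $s_{i_k}=-1$. The left exponent is then $\prod_k s_{i_k}=(-1)^m$. For the right exponent I need two ingredients:
\begin{itemize}
\item the linear sum, $\sum_k s_{i_k}=p-2m$;
\item the pair sum, obtained from $\big(\sum_k s_{i_k}\big)^2=p+2\sum_{k_1<k_2}s_{i_{k_1}}s_{i_{k_2}}$ (using $s^2=1$), which gives $\sum_{k_1<k_2}s_{i_{k_1}}s_{i_{k_2}}=\tfrac{(p-2m)^2-p}{2}$.
\end{itemize}
Substituting these into $\binom{p-1}{2}-(p-2)(p-2m)+\tfrac{(p-2m)^2-p}{2}$ and expanding, all $p$-dependence should cancel, leaving
\[
1+2m(m-2).
\]
This cancellation is the step I would verify most carefully. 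It is exactly where the constant $\binom{p-1}{2}$ and the coefficient $p-2$ are fixed.

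It then remains to show that $1+2m(m-2)\equiv(-1)^m \pmod 8$.
\begin{itemize}
\item For $m=2k$ the expression is $1+8k(k-1)\equiv 1$.
\item For odd $m$, $m(m-2)=(m-1)^2-1$ with $m-1$ even, so $m(m-2)\equiv -1 \pmod 4$. Hence $2m(m-2)\equiv -2 \pmod 8$, and the exponent is $\equiv -1$.
\end{itemize}
In both cases $e^{-i\frac{\pi}{4}(1+2m(m-2))}=e^{-i\frac{\pi}{4}(-1)^m}$ on every basis state, which proves the Claim as an operator identity. The mod-$8$ bookkeeping is the only real obstacle, and it reduces to the elementary fact that squares of even numbers are $0$ or $4$ modulo $8$.

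The computation works for any $p\ge 2$ (for $p=3$ it reproduces the decomposition of Sec.~\ref{ssec:3bodycase} up to a phase), so the restriction $p>3$ is harmless. As a consistency check, I would also compare with the $p=4$ case: applying the Claim to each contiguous block and summing over $i$ should reproduce the coefficients $3,2,1$ of the range-$1,2,3$ couplings found in Sec.~\ref{ssec:4bodycase}.
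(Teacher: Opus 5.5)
Your proof is correct, and it takes a genuinely different route from the paper.

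The paper argues by a two-step induction. The base cases $p=3$ and $p=4$ come from Lemma 1, which says $e^{\mp i\frac{\pi}{4}\prod_k(1-\sigma^z_{i_k})}$ equals the identity, combined with the expansion of Lemma 2. For the inductive step, the paper writes the $(p+1)$-body string as $A\,\sigma^z_{i_p}\sigma^z_{i_{p+1}}$ with $A=\sigma^z_{i_1}\cdots\sigma^z_{i_{p-1}}$. It applies the three-body identity with the involution $A$ standing in for a single spin. It then substitutes the $p$-body identity for $A\sigma^z_{i_p}$ and $A\sigma^z_{i_{p+1}}$, and the inverted $(p-1)$-body identity for the $-A$ term. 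Finally it checks that the constant, linear and pair coefficients recombine into $\binom{p}{2}$, $-(p-1)$ and $+1$.

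You instead check the identity directly on each $\sigma^z$ eigenstate. By permutation symmetry everything depends only on the number $m$ of down spins. The right-hand exponent then reduces to $1+2m(m-2)$ with no $p$-dependence; I redid the expansion and it is right. The congruence $1+2m(m-2)\equiv(-1)^m \pmod 8$ finishes the proof.

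What your route buys:
\begin{itemize}
\item It is shorter and self-contained: no base cases and no appeal to Lemmas 1 and 2.
\item It holds uniformly for every $p\ge 1$.
\item It avoids the tacit steps in the paper's induction (using a composite involution as a spin, and inverting an identity).
\item It shows why $\binom{p-1}{2}$ and $p-2$ are the right coefficients: they are exactly what cancels the $p$-dependence.
\end{itemize}
What the paper's route buys is construction: it shows how the formula is generated by iterating the three-body reduction. That is the mechanism the main text uses to motivate the order--range mapping.
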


\emph{Proof.} We proceed via induction. The base cases of $p=3$ and $p=4$ are already shown to be true in Sec. \eqref{ssec:3bodycase} and \eqref{ssec:4bodycase} respectively. Next, assume that the identity holds for $(p-1)$-body and $p$-body cases. 
Next, consider the case of $(p+1)$-body. Denoting $A= \sigma_{i_{1}}\sigma_{i_{2}}\cdots \sigma_{i_{p-1}}$ we have 
\begin{align}\label{seq:inducclaim2}
    \exp\left[-i \frac{\pi}{4} \sigma_{i_{1}}\sigma_{i_{2}}\cdots\sigma_{i_{p+1}} \right] &=\exp\left[-i \frac{\pi}{4} A\sigma_{i_{p}}\sigma_{i_{p+1}} \right], \nonumber \\
    &=\exp\left[-i \frac{\pi}{4}\left( 1  - (A+\sz_{i_{p}}+\sz_{i_{p+1}}) + A\sz_{i_{p}}+A\sz_{i_{p+1}} +\sz_{i_{p}}\sz_{i_{p+1}}\right)\right],
\end{align}
where we use results of three-body case in the last line. Next, we use the following expression 
\begin{align}
    \exp\left[-i \frac{\pi}{4}A \sigma_{i_{p}}\right]&= \exp\left[-i \frac{\pi}{4}\left( \binom{p-1}{2}-  (p-2) \left(\sum_{k=1}^{p-1} \sz_{i_{k}} + \sz_{i_{p}}\right) +\sum_{1\leq k_1 < k_{2} \leq p} \sz_{i_{k_1}}\sz_{i_{k_2}}\right)\right], \\
    \exp\left[-i \frac{\pi}{4}A \sigma_{i_{p+1}}\right]&= \exp\left[-i \frac{\pi}{4}\left( \binom{p-1}{2}-  (p-2) \left(\sum_{k=1}^{p-1} \sz_{i_{k}} + \sz_{i_{p+1}}\right) +\sum_{1\leq k_1 < k_{2} \leq p-1} \sz_{i_{k_1}}\sz_{i_{k_2}}+ \sum_{k=1}^{p-1}\sz_{i_{k}}\sigma_{i_{p+1}}\right)\right],\\
    \exp\left[-i \frac{\pi}{4}A \right]&= \exp\left[-i \frac{\pi}{4}\left( \binom{p-2}{2}-  (p-3) \left(\sum_{k=1}^{p-1} \sz_{i_{k}} \right) +\sum_{1\leq k_1 < k_{2} \leq p-1} \sz_{i_{k_1}}\sz_{i_{k_2}}\right)\right],
\end{align}
and obtain the following expression for Eq. \eqref{seq:inducclaim2}
\begin{align}
  \exp\left[-i \frac{\pi}{4} A\sigma_{i_{p}}\sigma_{i_{p+1}} \right]
    &=\exp\left[-i \frac{\pi}{4}\left( 1  - (A+\sz_{i_{p}}+\sz_{i_{p+1}}) + A\sz_{i_{p}}+A\sz_{i_{p+1}} +\sz_{i_{p}}\sz_{i_{p+1}}\right)\right], \nonumber \\
    &= \exp\left[-i \frac{\pi}{4}\left( \binom{p}{2} -(p-1)\sum_{k=1}^{p+1}\sz_{i_{k}} + \sum_{1\leq k_1 < k_{2} \leq p+1} \sz_{i_{k_1}}\sz_{i_{k_2}}\right)\right],
\end{align}
which is the result for $(p+1)-$body case. This concludes the proof. 

For a chain of length $L$ we thus have 
\begin{equation}
   \exp\left[-i \frac{\pi}{4} \sum_{i=1}^{L}\sz_{i}\sz_{i+{1}}\cdots\sz_{i+{p-1}} \right]= \exp\left[-i \frac{\pi}{4} \sum_{i=1}^{L}\left( \binom{p-1}{2}-  (p-2) \sum_{k=0}^{p-1} \sz_{i+{k}} +\sum_{0\leq k_1 < k_{2} \leq p-1} \sz_{i+{k_1}}\sz_{i+{k_2}}\right)\right].
\end{equation}
Using the fact that we have periodic boundary condition we have following expressions
\begin{align}
    &\sum_{i=1}^{L}\sum_{k=0}^{p-1} \sz_{i+k}= p \sum_{i=1}^{L}\sz_{i}, \\
    &\sum_{i=1}^{L}\sum_{0\leq k_1 < k_{2} \leq p-1} \sz_{i+{k_1}}\sz_{i+{k_2}}= \sum_{i=1}^{L}\sum_{r=1}^{p-1} (p-r)\sz_{i}\sz_{i+r}.
\end{align}

Using these we finally obtain 
\begin{equation}
   \exp\left[-i \frac{\pi}{4} \sum_{i=1}^{L}\sz_{i}\sz_{i+{1}}\cdots\sz_{i+{p-1}} \right]= \exp\left[-i \frac{\pi}{4} \left( L \binom{p-1}{2}-  p(p-2) \sum_{i=1}^{L} \sz_{i} +\sum_{r=1}^{p-1}(p-r) \sum_{i=1}^{L}\sz_{i}\sz_{i+r}\right)\right].
\end{equation}
Thus we have mapped the Floquet evolution operator of $p$- body model to a Floquet evolution operator of a model with 2-body interactions having maximal range of $p-1$. We can explicitly read off the corresponding Hamiltonian (both the kicked and the interaction part) which is as follows
\begin{align}\label{seq:3rmodel}
 H^{((p-1)R)}_{\text{KFIM}}&= H_{I} + H_{K} \sum_{\tau= -\infty}^{\infty} \delta(t- \tau),
 \end{align}
where  $$H^{((p-1)R)}_{I} = \frac{\pi}{4} \sum_{r=1}^{p-1}(p-r) \sum_{i=1}^{L}\sz_{i}\sz_{i+r} + \sum_{i} \left(h_{i} -p(p-2)\frac{\pi}{4}\right) \sigma_{i}^{z}, \quad 
    H_{K} = b\sum_{i=1}^{L} \, \sigma^{x}_{i}$$

This concludes the proof.

\section{Proof of Corollary 1} \label{ssec:cor1proof}
\begin{rcoro}
    For odd $p$, with periodic boundary conditions, $J= \frac{(2n+1) \pi}{4}$, we have the following two cases:
    \begin{itemize}
    \item For even $L$ the model is mapped, up to a global phase, to two decoupled chains each of which has maximum range $\frac{p-1}{2}$ interactions. 
        \item For odd $L$ the model is mapped, up to a global phase, to a single chain with maximum range $\frac{p-1}{2}$ interactions. 
    \end{itemize}
\end{rcoro}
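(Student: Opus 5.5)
The plan is to start from the explicit range-$(p-1)$ Hamiltonian obtained in the proof of Theorem 1 and show two things. First, for odd $p$ and $J=\frac{(2n+1)\pi}{4}$, every odd-range coupling contributes only a global phase. Second, the surviving even-range couplings define a graph that is either two decoupled chains or a single relabeled chain.

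Concretely, I take the identity from the proof of Theorem 1 with $\frac{\pi}{4}$ replaced by $J$. This is legitimate because the $n$-th power argument there shows the $\sigma^z$ part of the identity holds for any $J=\frac{m\pi}{4}$ with couplings scaled by $m$. The interaction unitary is then, up to a phase,
\begin{equation*}
\prod_{r=1}^{p-1}\exp\Bigl[-iJ(p-r)\sum_{i=1}^{L}\sz_i\sz_{i+r}\Bigr]\;\exp\Bigl[-i\sum_{i}\bigl(h_i-p(p-2)J\bigr)\sz_i\Bigr].
\end{equation*}

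For $p$ odd and $r$ odd, $p-r=2m$ is even, so the coupling is $J(p-r)=m(2n+1)\frac{\pi}{2}$.
\begin{itemize}
\item If $m(2n+1)$ is even, then $\exp[-ik\pi\,\sz_i\sz_{i+r}]=(-1)^k$ is a c-number.
\item If $m(2n+1)$ is odd, I use $\exp[-i\frac{\pi}{2}\sz_i\sz_{i+r}]=-i\,\sz_i\sz_{i+r}$. The product $\prod_{i=1}^{L}\sz_i\sz_{i+r}$ equals $\mathbbm{I}$ under periodic boundary conditions, since each site appears exactly twice. This is the same trick used for $r=1$ in the three-body case.
\end{itemize}
In both cases the odd-$r$ factors reduce to a global phase, which is absorbed into $\phi(L)$. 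The remaining terms have $r=2s$ with $s=1,\dots,\frac{p-1}{2}$ and couplings $J(p-2s)$. Here $p-2s$ is odd and nonzero, so these couplings are nontrivial, and the maximal one is $s=\frac{p-1}{2}$ with coupling $J$. The longitudinal fields and the kick $b\sum_i\sx_i$ are single-site terms and are unaffected.

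Next I analyze connectivity. Every surviving bond joins $i$ and $i+2s$, so it preserves the parity of the site label whenever the wrap-around is consistent.
\begin{itemize}
\item For even $L$, the odd and even sublattices are never coupled. The operator factorizes as $U_{\text{odd}}\otimes U_{\text{even}}$, because the kick is also a product over sites. Relabeling $2m-1\mapsto m$ and $2m\mapsto m$ turns bonds $(i,i+2s)$ into $(m,m+s)$ on periodic chains of length $L/2$. Each chain therefore has maximum range $\frac{p-1}{2}$.
\item For odd $L$, $2$ is invertible mod $L$. I use the permutation $j\mapsto 2^{-1}j \bmod L$, which for $p=3$ reproduces the relabeling in the main text. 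It sends $i+2s\mapsto 2^{-1}i+s$, so all bonds become range-$s$ bonds on a single periodic chain of length $L$. The kick term $\sum_i\sx_i$ is permutation invariant, and the fields are simply permuted.
\end{itemize}

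The main obstacle is the careful bookkeeping of boundary and degenerate cases. First, the global-phase claim for odd $r$ must handle both parities of $m(2n+1)$. Second, one must check that bonds $(i,i+2s)$ for different $s$ do not coincide or cancel after wrapping. This requires roughly $L>p-1$ (or $L/2>\frac{p-1}{2}$ in the even case), so that range $s$ on the new chain is genuinely at most $\frac{p-1}{2}$ rather than aliasing. I would state that assumption explicitly and verify the relabeling is a bijection on sites.
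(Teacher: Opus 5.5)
Your proposal is correct and follows essentially the same route as the paper's proof. Odd-range couplings $J(p-r)$, with $p-r$ even, are discarded as global phases via $\prod_{i}e^{-i\frac{\pi}{2}\sz_i\sz_{i+r}}\propto\mathbbm{I}$ under periodic boundary conditions. The surviving even-range bonds $(i,i+2s)$ are then sorted by parity and relabeled to range $s\le\frac{p-1}{2}$.

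Your explicit bijection $j\mapsto 2^{-1}j \bmod L$ for odd $L$ agrees with the main-text $p=3$ relabeling up to a cyclic shift by $\frac{L+1}{2}$. It makes the odd-$L$ range bound more transparent than the paper's connectivity argument through the wrap-around bonds. Your case split for the odd-$r$ phases is also more careful than the paper's one-line assertion.
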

\emph{Proof.}  
We first note that $p-r$ gives 1 for $r= p-1$. Thus there are always range $p-1$ interactions in the mapped model. The lattice sites can then be arranged into sublattices each of which contains sites denoted by $i$, such that $\bmod(p-1, i) = k; k \in \{0,1,\cdots,p-2\}$. Further note that there are two kind of sublattices: 
\begin{itemize}
    \item Type $A$: $k \in \{0,2,\cdots,p-2\}$ which contains only even lattice sites. 
    \item Type $B$ : $k \in \{1,3,\cdots,p-3\}$ which contains only odd lattice sites. 
\end{itemize} 

Next, we note that $(p-r)$ is even when $r$ is odd and thus it contributes a trivial global phase to the total Floquet operator. Thus the only remaining non-trivial couplings have only even ranges as follows: 
$$ 2, 4, \cdots p-3, p-1.$$

We now explicitly look at the two cases of even and odd $L$ as follows.

\subsection{Even $L$}
Start by arranging the $L$ lattice sites into sublattices where each of the sublattices consist of sites denotes by $i$ such that $\bmod(p-1, i) = k; k \in \{0,1,\cdots,p-2\}$, together. Note that range $p-1$ interaction couples the sites within each of the sublattices together, while the sublattices are at present decoupled from each other. Next, using the fact that each of the sublattices only contains only even or only odd site and that the range of non-trivial couplings is even we come to the conclusion that sublattices of type $A$ are coupled to each other and type $B$ are coupled to each other while none of the sublattices belonging to type $A$ and $B$ are coupled to each other. Furthermore, for each type of the sublattices $A$ and $B$ we can do the relabeling of the sites as follows
$$
i \mapsto \frac{i}{2}\, \text{for even sites},\nonumber \\
  i \mapsto \frac{i+1}{2} \, \text{for odd sites.} 
$$
Thus we obtain two decoupled chains each consisting of range $\frac{p-1}{2}$ interactions. 

\subsection{Odd $L$}
For the case of odd $L$ we can proceed the same as the even $L$ case. However, we now note that we have one isolated site, $L$, which stands out. Since we are working with periodic boundary condition we get following interactions due to it
$$ (L, p-1), (L-1, p-2), (L-2, p-2), \cdots (L -(p-2), 1)$$

which results in an interaction between sublattices of type $A$ and $B$ thus given a single coupled chain with range $\frac{p-1}{2}$ interactions. 

This concludes the proof. 
\end{document}